\newcommand\nbthis{\addtocounter{equation}{1}\tag{\theequation}}
\newcommand{\ALOOP}[1]{\ALC@it\algorithmicloop\ #1%
	\begin{ALC@loop}}
	\newcommand{\ENDALOOP}{\end{ALC@loop}\ALC@it\algorithmicendloop}
\tikzset{>=stealth}
\definecolor{lightred}{RGB}{255, 240, 240}
\definecolor{lightblue}{RGB}{230, 250, 255}
\definecolor{lightgreen}{RGB}{240, 255, 242}
\definecolor{myred}{RGB}{220, 0, 0}
\definecolor{myblue}{RGB}{0, 17, 173}
\definecolor{mygreen}{RGB}{2, 117, 0}
\newtheorem{lemma}{Lemma}
\newcommand{\mca}[1]{\mathcal{#1}}
\newcommand{\mbf}[1]{\mathbf{#1}}
\newcommand{\mbb}[1]{\mathbb{#1}}
\newcommand{\mrm}[1]{\mathrm{#1}}
\newcommand{\bsm}[1]{\boldsymbol{#1}}
\newcommand{\tr}{\operatorname{tr}}
\newcommand{\fbeam}{f_{\mathrm{beam}}}
\newcommand{\fCRLB}{f_{\mathrm{CRLB}}}
\newcommand{\diag}[1]{\operatorname{diag}{#1}}
\newcommand{\vectorize}[1]{\mathrm{vec}#1}
\DeclareMathOperator*{\argmin}{arg\,min}
\DeclareMathOperator*{\argmax}{arg\,max}
\DeclareMathOperator*{\maximize}{maximize}
\DeclareMathOperator*{\minimize}{minimize}
\DeclareMathOperator*{\optimize}{optimize}
\DeclareMathOperator*{\st}{subject\;to}
\newcommand{\TT}{\mrm{T}}
\newcommand{\HH}{\mrm{H}}
\newcommand{\SINR}{\mathrm{SINR}}
\newcommand{\radar}{\mathrm{R}}
\newcommand{\RC}{\mathrm{RC}}
\newcommand{\rrm}{\mathrm{r}}
\newcommand{\lk}{{\ell}_k}
\newcommand{\lkp}{{\ell}_{k'}}
\newcommand{\CRLB}{\mrm{CRLB}}
\newcommand{\Covyr}{\bsm{\Sigma}_{\mbf{y}_{\mrm{r}}}}
\newcommand{\jj}{\mathtt{j}}
\begin{document}
\setlength\abovedisplayskip{5.5pt}
\setlength\belowdisplayskip{5.5pt}


%
\title{Exploiting Symmetric Non-Convexity for Multi-Objective Symbol-Level DFRC Signal Design}
%

\author{
Ly~V.~Nguyen, Rang Liu, Nhan Thanh Nguyen, Markku Juntti, Bj\"{o}rn Ottersten, and A.~Lee~Swindlehurst
\thanks{Part of this paper has been presented in Proc. IEEE Int. Conf. Commun., 2024~\cite{VanLy2024Exploitation}.}
\thanks{Ly V. Nguyen, Rang Liu, and A. Lee Swindlehurst are with the Center for Pervasive Communications and Computing, Henry Samueli School of Engineering, University of California, Irvine, CA, USA 92697 (e-mail: vanln1@uci.edu, rangl2@uci.edu, swindle@uci.edu).}
\thanks{Nhan Thanh Nguyen and Markku Juntti are with the Center for Wireless Communications, University of Oulu, 90014 Oulu, Finland (email: nhan.nguyen@oulu.fi, markku.juntti@oulu.fi).}
\thanks{Bj\"{o}rn Ottersten is with the Interdisciplinary Center for Security, Reliability and Trust (SnT), University of Luxembourg, 1855 Luxembourg City, Luxembourg (e-mail: bjorn.ottersten@uni.lu)}
}

\maketitle

\begin{abstract}
Symbol-level precoding (SLP) is a promising solution for addressing the inherent interference problem in dual-functional radar-communication (DFRC) signal designs. This paper considers an SLP-DFRC signal design problem which optimizes the radar performance under communication performance constraints. We show that a common phase modulation applied to the transmit signals from an antenna array does not affect the performance of different radar sensing metrics, including beampattern similarity, signal-to-interference-plus-noise ratio (SINR), and Cram{\'e}r-Rao lower bound (CRLB). We refer to this as symmetric-rotation invariance, upon which we develop low-complexity yet efficient DFRC signal design algorithms. More specifically, we propose a symmetric non-convexity (SNC)-based DFRC algorithm that relies on the non-convexity of the radar sensing metrics to identify a set of radar-only solutions. Based on these solutions, we further exploit the symmetry property of the radar sensing metrics to efficiently design the DFRC signal. We show that the proposed SNC-based algorithm is versatile in the sense that it can be applied to the DFRC signal optimization of all three sensing metrics mentioned above (beampattern, SINR, and CRLB). In addition, since the radar sensing metrics are independent of the communication channel and data symbols, the set of radar-only solutions can be constructed offline, thereby reducing the computational complexity. We also develop an accelerated SNC-based algorithm that further reduces the complexity. Finally, we numerically demonstrate the superiority of the proposed algorithms compared to existing methods in terms of sensing and communication performance as well as computational requirements.  
\end{abstract}

\begin{IEEEkeywords}
Dual-functional radar-communication (DFRC), integrated sensing and communications (ISAC), symbol-level precoding, non-convexity, symmetry, radar beampattern, radar signal-to-interference-plus-noise ratio (SINR), radar Cram{\'e}r-Rao lower bound (CRLB).
\end{IEEEkeywords}

%
\IEEEpeerreviewmaketitle

\section{Introduction}
\label{sec_introduction}

The explosive growth of wireless services and the proliferation of smart devices have propelled the development of integrated sensing and communication (ISAC) technologies. As a key feature for sixth-generation (6G) wireless networks, ISAC offers an integration of environmental sensing and high-speed communication, achieving significant efficiency gains through shared resources~\cite{FLiu-Tcom-2020,FLiu-JSAC-2022,JZhang-CST-2022,AKaushik-CSM-2024}. Within the ISAC framework, dual-functional radar-communication (DFRC) systems play a pivotal role, supporting simultaneous radar sensing and communication on the same platform.
This tight integration not only enhances spectral and energy efficiency but also reduces system complexity, hardware costs, and power consumption, making DFRC an attractive technology for future wireless systems. Potential applications of DFRC span autonomous driving, smart cities, environmental monitoring, and the Internet of Things (IoT), where both high communication quality and precise sensing are essential \cite{SLu-IotJ-2024}.



Attracted by its enormous potential, DFRC has been extensively studied in recent years. Dual-functional waveform design serves as a cornerstone for optimizing the performance tradeoff between radar sensing and communication functionalities \cite{JZhang-JSTSP-2021}. Among the possible approaches, spatial-domain beamforming has garnered significant attention for enhancing DFRC performance, as outlined in~\cite{FLiu-TWC-2018,XLiu-TSP-2020,ZCheng-TCCN-2021,ZCheng-JSTSP-2021,CQi-CL-2022,XWang-Tcom-2022,Liu-TSP-2022,NNguyen-JSTSP-2024,RLiu-TWC-2024}. 
These studies primarily focus on fulfilling communication requirements, such as ensuring a sufficient signal-to-interference-plus-noise ratio (SINR)\cite{FLiu-TWC-2018,XLiu-TSP-2020,CQi-CL-2022,XWang-Tcom-2022,Liu-TSP-2022} or maintaining a high data rate \cite{ZCheng-TCCN-2021,NNguyen-JSTSP-2024,RLiu-TWC-2024}, while simultaneously addressing the distinct objectives of the radar sensing tasks. These tasks typically fall into two categories: target detection, where maximizing radar SINR improves detection probability and reliability \cite{RLiu-TWC-2024}, and target parameter estimation, where minimizing the Cram{\'e}r-Rao lower bound (CRLB) enhances the accuracy of range, velocity, and angle estimation \cite{XWang-Tcom-2022,Liu-TSP-2022,RLiu-TWC-2024}. Beyond these task-specific objectives, generalized sensing performance metrics, such as beampattern design \cite{FLiu-TWC-2018,XLiu-TSP-2020,CQi-CL-2022} or waveform similarity \cite{ZCheng-TCCN-2021,NNguyen-JSTSP-2024}, have also been proposed to address the challenges of sensing in complex environments.

While these approaches have achieved notable success, from a communications perspective they attempt to minimize the impact of multi-user interference using so-called block-level processing (BLP), in which channel-coherent blocks of data undergo spatial signal processing  \cite{Spencer-CM-2004}. 
However, the time-domain properties of the signals, which are crucial for managing clutter and dynamic targets \cite{BTang-TSP-2016,LWu-TSP-2018,BTang-TSP-2020}, are not exploited.
Furthermore, DFRC systems inherently suffer from interference between radar and communication functionalities, which spatial beamforming alone cannot fully mitigate. Traditional spatial-domain beamforming methods for communication users typically treat radar signals as interference and attempt to suppress them. However, radar signals often have significantly higher power than communication signals, making complete suppression impractical and potentially degrading communication performance.

To overcome the limitations of traditional spatial beamforming, symbol-level precoding (SLP) has emerged as a transformative approach~\cite{CMasouros-TWC-2009,CMasouros-TSP-2015,HJedda-TWC-2018,MAlodeh-CST-2018,ALi-CST-2020,Ly2024CCP}. Unlike conventional BLP, which relies on second-order statistics to suppress interference, SLP directly optimizes the transmit waveform on a symbol-by-symbol basis, leveraging instantaneous multi-user symbol information to reshape interference into a constructive component at each user's receiver. This paradigm shift is particularly beneficial for DFRC, where radar-induced interference, rather than being suppressed, can be exploited to improve communication performance. Through joint optimization of radar sensing and communication functionalities, SLP-based DFRC designs harness both spatial and temporal degrees of freedom (DoFs), leading to superior system performance. Recent studies \cite{XYu-JSAC-2022,RLiu-JSTSP-2021,ZWu-JSAC-2025,BWang-WCL-2022,PHuang-GCW-2023,JZhang-GCW-2023,PLi-TWC-2024} have demonstrated that SLP-DFRC approaches significantly outperform traditional BLP-DFRC methods, achieving an improved balance between radar sensing accuracy and communication quality-of-service (QoS), making SLP a promising candidate for next-generation ISAC networks.


Early works on SLP-based DFRC primarily focused on minimizing multi-user interference (MUI) while incorporating radar performance constraints, such as waveform similarity error \cite{FLiu-TSP-2018,XYu-JSAC-2022}, transmit beampattern similarity error \cite{BTang-SAM-2020}, and the CRLB \cite{ZCheng-TWC-2021}. While these methods leverage the temporal DoFs provided by SLP to optimize the symbol-dependent waveforms, they do not fully harness the constructive interference (CI) capabilities of SLP, limiting their potential performance gains. Recognizing this deficiency, recent research has shifted towards CI-SLP-based DFRC designs, where interference from both radar and multiuser signals is no longer treated as a disruptive factor but is instead restructured to contribute constructively to symbol detection. Unlike traditional SLP methods that strictly enforce phase alignment with transmitted symbols, CI-SLP relaxes this constraint, introducing additional degrees of freedom in waveform design~\cite{ALi-CST-2020}. This increased flexibility allows for enhanced radar sensing capabilities without degrading communication quality. Various optimization objectives have been explored within this framework. Some studies have focused on optimization of beampattern similarity, ensuring that the transmitted waveform closely matches a given desired radar illumination pattern~\cite{RLiu-JSTSP-2021,ZWu-JSAC-2025,JYan-WCNC-2022,YWang-ICCC-2024}. Others have sought to maximize radar SINR, improving target detection probability~\cite{RLiu-JSAC-2022,BWang-WCL-2022}, minimize the CRLB, enhancing parameter estimation accuracy~\cite{MWang-ICCC-2022,PHuang-GCW-2023,JZhang-GCW-2023}, or minimizing the radar sidelobe levels for better range-Doppler estimation performance \cite{PLi-TWC-2024}.

Despite these advancements, the high computational complexity of existing SLP-DFRC optimization methods remains a fundamental challenge for practical deployment. The joint symbol-level optimization of radar and communication functionalities inherently leads to large-scale, non-convex, and highly nonlinear problems, posing significant difficulties for real-time implementation. Most existing approaches rely on iterative algorithmic frameworks such as alternating direction method of multipliers (ADMM), majorization maximization (MM), successive convex approximation (SCA), or the Lagrangian method with block successive upper-bound minimization (ALM-BSUM). These approaches, while effective, often suffer from slow convergence rates and substantial per-iteration computational overhead.
Moreover, many of these methods lack closed-form solutions for some variables, necessitating the use of convex optimization solvers within the iterative procedure, which further exacerbates computational complexity. Even in cases where closed-form solutions have been derived for certain subproblems~\cite{JYan-WCNC-2022,JZhang-GCW-2023,ZWu-JSAC-2025}, they often introduce auxiliary variables and additional constraints, leading to an increased number of iterations and elevated overall computational cost. 


Motivated by these challenges, in this paper we propose a computationally efficient SLP-DFRC waveform design framework that exploits the symmetric non-convexity (SNC) property of radar sensing metrics. 
While preliminary results in~\cite{VanLy2024Exploitation} demonstrated the potential of this approach for beampattern similarity optimization, this work extends the SNC-based framework to radar SINR and CRLB optimization, providing a generalized, low-complexity solution for SLP-DFRC waveform design.
Our main contributions are summarized as follows:
\begin{itemize}
    \item  We first propose a novel SLP-DFRC waveform design method that maximizes the radar beampattern similarity under symbol-level communication constraints. We observe that a common phase modulation applied to the transmit signals from an antenna array does not affect the beampattern similarity metric, a property we hereafter refer to as ``\textit{symmetric-rotation invariance}''. Motivated by this observation, we propose an SNC-based algorithm that first identifies a set of radar-only solutions leveraging the non-convexity of the radar beampattern metric. The radar-only solutions can be precomputed offline since the sensing metric is independent of the communication channel and data symbols, therefore significantly reducing computational complexity. The phase of the elements of each locally optimal radar-only waveform is then perturbed in order to (1) meet safety margin constraints for the communication users, and (2) yield a DFRC waveform that is as close as possible to the radar-only solution multiplied by a common phase modulation across all antennas.
    

    \item Next, we derive two other SLP-DFRC waveform designs that optimize the radar SINR and CRLB. We show that the symmetric-rotation invariance property also applies to the radar SINR and CRLB metrics, enabling the previously proposed SNC-based algorithm to be effectively used for DFRC signal optimization involving the SINR and CRLB. We derive closed-form expressions for the gradients of the complex SINR and CRLB metrics involved, facilitating the optimization.
    
    \item We then develop an accelerated SNC-based algorithm, hereafter referred to as the aSNC approach, that further reduces the complexity. We show analytically that the computational complexity of the proposed SNC and aSNC algorithms is significantly lower than that of other existing approaches. 

    \item Finally, we provide extensive numerical results to validate the effectiveness of the proposed algorithms, demonstrating their superior radar sensing and communication performance compared to existing approaches. Furthermore, we show significantly reduced computational complexity in multi-objective optimization scenarios.
\end{itemize}

The rest of this paper is organized as follows: Section~\ref{sec_sysytem_model_and_problem_formulation} presents the system model and formulates the problem of interest. The proposed SNC-based DFRC signal design method for radar beampattern, SINR, and CRLB metrics is presented in Section~\ref{sec:beam}, Section~\ref{sec:SINR}, and Section~\ref{sec:CRLB}, respectively. Section~\ref{sec:aR2DFRC} introduces the proposed aSNC-based method, while Section~\ref{sec:complexity} compares the computational complexity of the proposed methods with other existing approaches. Numerical results are provided in Section~\ref{sec_results}, and Section~\ref{sec_conclusion} concludes the paper.

\textit{Notation:} Upper-case and lower-case boldface letters denote matrices and column vectors, respectively. The transpose, conjugate, and conjugate transpose are denoted by $[\cdot]^{\TT}$, $[\cdot]^*$ and $[\cdot]^\HH$, respectively. The notation $\mbf{X}_{i:j,k:\ell}$ represents the sub-matrix of $\mbf{X}$ that includes rows $i$ to $j$ and columns $k$ to $\ell$. The expectation of random quantities is denoted by $\mathbb{E}[\cdot]$. The operator $\diag(\mbf{a})$ denotes a diagonal matrix whose diagonal elements are defined by the vector $\mbf{a}$. The operator $|\cdot|$ denotes the absolute value of a number. The notation $\Re\{\cdot\}$ and $\Im\{\cdot\}$ respectively denotes the real and imaginary parts of the complex argument. The notation $\mrm{Proj}(\cdot)$ is the projection function that maps its argument onto the complex unit circle, i.e., $\mrm{Proj}(x) = e^{\jj\measuredangle(x)}$ where $\measuredangle(x)$ is the angle of $x$. If $|\cdot|$, $\Re\{\cdot\}$, $\Im\{\cdot\}$, and $\mrm{Proj}(\cdot)$ are applied to a matrix or vector, they are applied separately to every element of that matrix or vector. The notation $\mathcal{CN}(\cdot,\cdot)$ represents a complex circularly-symmetric normal distribution, where the first argument is the mean and the second argument is the variance or the covariance matrix. Finally, $\jj$ is the unit imaginary number satisfying $\jj^2=-1$.

\section{System Model and Problem Formulation}
\label{sec_sysytem_model_and_problem_formulation}
\subsection{System Model}
\label{sec_system_model}
We consider a co-located monostatic multiple-input-multiple-output (MIMO) DFRC system  where a base station (BS) equipped with $N$ antennas simultaneously serves $U$ single-antenna communication users and detects the locations of $K$ targets, where it is assumed that $U \leq N$ and $K \leq N$. 
Let $\mbf{H} = [\mbf{h}_1,\,\ldots,\,\mbf{h}_{U}]^\HH \in \mbb{C}^{U \times N}$ denote the downlink channel from the BS to the communication users. The signal vector received by the users is $\mbf{y}_t = \mbf{Hx}_t+\mbf{n}_t$, where $\mbf{x}_t = [x_{1,t},\,\ldots,\,x_{N,t}]^\TT$ is the transmit DFRC signal at time slot $t = 1,\,\ldots,\,\tau$ and $\mbf{n}_{t}\sim\mca{CN}(\mbf{0},\sigma^2_{\mrm{u}}\mbf{I}_{U})$ denotes the noise at the users, where $\tau$ is the length of the transmit signal sequence. We also assume that the elements of $\mbf{x}_t$ have the same constant modulus $|x_{n,t}|^2 = P/N$ for $n = 1, \, \ldots, \, N$, where $P$ is the total transmit power. In practice, the constant-modulus constraint allows the  antennas to transmit at their maximum power to achieve the highest power efficiency, and it also guarantees low peak-to-average power ratio (PAPR), which enables the use of low-cost non-linear amplifiers. This paper focuses on the DFRC transmit signal design problem where the transmit signal $\mbf{x}_t$ is designed to simultaneously serve communication users and illuminate the targets of interest. The developed transmit signal design technique can work for either full- or half-duplex sensing systems\footnote{The impact of self-interference due to full-duplex operation on the monostatic sensing performance is an interesting future study item.}.

\subsection{Communication Performance Metric}
\label{sec_comm_metric}
Let $\mbf{s}_t \in \mca{S}^{U}$ denote the symbols intended for the users, where $\mca{S}$ is the symbol alphabet. We assume $M$-ary phase shift keying ($M$-PSK) signaling, i.e., $s_{u,t} \in \mca{S} = \exp{\big(\jj\pi\frac{2m_{u,t}+1}{M}\big)}$ where $m_{u,t} \in \{0,\,\ldots,\,M-1\}$. Let $z_{u,t} = s_{u,t}^*\mbf{h}_u^\HH\mbf{x}_t$ denote the rotated noiseless received signal of user $u$. The safety margin of $z_{u,t}$ is illustrated in Fig.~\ref{fig_safety_margine_illustration} and is defined as follows~\cite{HJedda-TWC-2018}:
\begin{align}
    \delta_{u,t} = \Re\{z_{u,t}\}  \sin(\pi/M) - |\Im\{z_{u,t}\} |\cos(\pi/M).
\end{align}
It is clear that the farther $z_{u,t}$ is from the symbol decision boundaries, the more likely that the received signal $y_{u,t}$ will be correctly detected, i.e., the more robust it will be against the effects of noise and interference. To ensure the communication QoS, the transmitted DFRC signal $\mbf{x}_t$ must be designed such that the safety margin $\delta_{u,t}$ meets or exceeds a predefined minimum threshold $\gamma_u$, i.e., $\delta_{u,t} \geq \gamma_u, \forall u, t$.

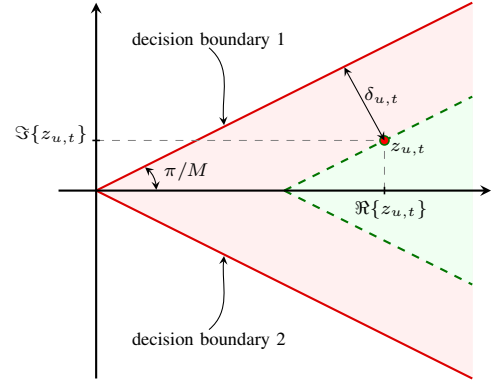
\begin{figure}
    \centering
    \begin{tikzpicture}
    \draw [lightred, fill=lightred] (0,0) -- (5,2.5) -- (5,-2.5) -- (0,0);

    \draw [lightgreen, fill=lightgreen] (2.5,0) -- (5,1.25) -- (5,-1.25) -- (2.5,0);
    
    \draw [thin, <->] (0.65,0.31) to [out=-40, in=95] (0.8,0.01);
    \node at (1.2,0.25) {{\scriptsize $\pi/M$}};
    
    \draw [thin, ->] (1.5,1.9) to [out=-40, in=95] (1.7,0.87);
    \node [rotate = 0] at (1.5,2) {{\scriptsize decision boundary 1}};
    
    \draw [thin, ->] (1.5,-1.85) to [out=40, in=-95] (1.7,-0.87);
    \node [rotate = 0] at (1.5,-2) {{\scriptsize decision boundary 2}};
    
    \draw [thick,-,myred] (0,0) to (5,2.5);
    \draw [thick,-,myred] (0,0) to (5,-2.5);
    
    \draw [thick,dashed,-,mygreen] (2.5,0) to (5,1.25);
    \draw [thick,dashed,-,mygreen] (2.5,0) to (5,-1.25);
    
    \draw [thick,->] (-0.5,0) to (5.25,0);
    \draw [thick, ->] (0,-2.5) to (0,2.5);
    
    \draw [help lines, dashed, -] (0,0.667) to (3.833,0.667);
    \draw [help lines, dashed, -] (3.833,0) to (3.833,0.667);
    \draw [mygreen, fill = red, semithick] (3.833,0.667) circle [radius=0.06];
    \node at (4.15,0.55) {{\scriptsize ${z}_{u,t}$}};
    
    \draw [thin,<->] (3.833,0.667) to (3.3,1.64);
    \node at (3.8,1.25) {{\scriptsize $\delta_{u,t}$}};
    
    \draw [semithick,-] (3.833,-0.05) to (3.833,0.05);
    \draw [semithick,-] (-0.05,0.667) to (0.05,0.667);
    \node at (3.92,-0.25) {{\scriptsize  $\Re\{z_{u,t}\}$}};
    \node at (-0.6,0.74) {{\scriptsize  $\Im\{z_{u,t}\}$}};

    
    \end{tikzpicture}
    \caption{Illustration of the safety margin $\delta_{u,t}$ of user $u$ at time slot $t$, where $z_{u,t} = s_{u,t}^*\mbf{h}_u^\HH\mbf{x}_t$ is the rotated noiseless received signal of user $u$.}
    \label{fig_safety_margine_illustration}
\end{figure}

\subsection{Radar Sensing Performance Metrics}
\label{sec_radar_metric}
We assume that the $K$ targets are located at $K$ different angles $\vartheta_1,\,\ldots,\,\vartheta_K$. When the transmit
signal $\mbf{X} = [\mbf{x}_1,\, \ldots,\, \mbf{x}_\tau]$ is reflected by the $K$ point-like targets, the radar received signal matrix is given by
\begin{equation}
    \mbf{Y}_{\rrm} = \sum_{k=1}^{K} \alpha_{k} \mbf{a}(\vartheta_k)\mbf{a}(\vartheta_k)^\HH\mbf{X} + \mbf{N}_{\rrm}\label{eq:radar_rx_signal}
\end{equation}
where $\alpha_{k} \sim \mca{CN}(0, \sigma^2_{k})$ is the amplitude that results due to path loss and the radar cross section (RCS) of target $k$, $\mbf{a}(\vartheta_k)$ is the steering vector of target $k$ at angle $\vartheta_k$, and the matrix $\mbf{N}_{\mrm{r}}$ represents noise at the radar receiver with entries that are assumed to be independent and identically distributed (i.i.d.) as $\mca{CN}(0,\sigma_{\mrm{r}}^2)$. With the radar signal model in~\eqref{eq:radar_rx_signal}, the number of transmit and receive antennas are assumed to be the same. For brevity, in the rest of the paper we will use $\mbf{a}_{k}$ in lieu of $\mbf{a}(\vartheta_k)$.

We will consider three different radar sensing performance metrics, including radar beampattern similarity, SINR, and the CRLB. In particular, the transmit DFRC signal $\mbf{X}$ will be designed to either maximize the radar beampattern similarity, maximize the radar SINR, or minimize the CRLB of the estimates of the target angles $\vartheta_1,\,\ldots,\,\vartheta_K$. We denote the sensing performance metric as $f_{\mca{A}}(\mbf{X})$, where the subscript $\mca{A}$ can represent `$\mrm{beam}$', `$\mrm{SINR}$', or `$\mrm{CRLB}$' to indicate which metric is considered. Specific expressions for these metrics will be derived in detail in subsequent sections.

\subsection{Problem Formulation}
\label{sec_problem_formulation}
We consider the problem of optimizing the radar performance under communication performance and transmit power constraints, as follows:
\begin{subequations}
\label{eq_problem1}
\begin{align*}
    \optimize_{\{\mbf{X}\}}  \quad &   f_{\mca{A}}(\mbf{X})\\
     \st \quad & \delta_{u,t} \geq \gamma_u, \; \forall u,t \nbthis \label{eq:beam_constrain1}\\
     \quad & |x_{n,t}|^2 = P/N, \; \forall n,t \nbthis \label{eq:beam_constrain2}
\end{align*}
\end{subequations}
where the operator `$\optimize$' is `$\minimize$' when the subscript $\mca{A}$ is `$\mrm{beam}$' or `$\mrm{CRLB}$', and is `$\maximize$' when the subscript $\mca{A}$ is `$\mrm{SINR}$'. Specifically, we minimize the beampattern similarity error and the CRLB of the target parameter estimates, while we maximize the received SINR of the targets. Thus, we aim to optimize the radar sensing metric, while guaranteeing that the safety margin $\delta_{u,t}$ of user $u$ at time slot $t$ is larger than or equal to a threshold~$\gamma_u$. 

As will be demonstrated later, solving problem~\eqref{eq_problem1} is challenging because the objective function $f_{\mca{A}}(\mbf{X})$ generally has a complex form and is non-convex with respect to (w.r.t.) $\mbf{X}$ for the three considered radar metrics. In the following sections, we will present the proposed SNC-based DFRC signal design methods for the beampattern, SINR, and CRLB metrics.

\section{Radar Beampattern-Based Design}
\label{sec:beam}
This section considers the beampattern similarity metric $\fbeam(\mbf{X})$, which is defined as the squared error between the actual and desired BS beampatterns. We consider a set of $L$ angles denoted by $\{\theta_1,\,\ldots,\,\theta_L\}$ where $L\gg K$ for which the beampattern similarity metric $\fbeam(\mbf{X})$ is given as~\cite{RLiu-JSTSP-2021}
\begin{align}
    \fbeam(\beta_t,\mbf{X}) = \frac{1}{L\tau}\sum_{t=1}^{\tau}\sum_{\ell=1}^L |\beta_t b_\ell - \mbf{x}^\HH_t\mbf{a}_\ell\mbf{a}_\ell^\HH\mbf{x}_t|^2,
\end{align}
where $\{\beta_t\}$ are scaling parameters, $b_\ell$ represents the desired beampattern at angle $\theta_\ell$, and $\mbf{a}_\ell$ represents the steering vector at angle $\theta_\ell$, i.e., $\mbf{a}_\ell = \mbf{a}(\theta_\ell)$, for $\ell \in \mca{L} = \{1,\,\ldots,\,L\}$. 
The problem of interest is
\begin{subequations}
\label{eq_problem1_beam}
\begin{align*}
    \minimize_{\{\beta_t,\mbf{X}\}} \quad& \fbeam(\beta_t,\mbf{X}) \nbthis \label{eq:beam_objective}\\
     \st \quad& \eqref{eq:beam_constrain1} \;\text{and}\; \eqref{eq:beam_constrain2}.
\end{align*}
\end{subequations}

Since $\fbeam(\beta_t,\mbf{X})$ is a quadratic function of $\beta_t$, the optimal $\beta_t$ that minimizes $\fbeam(\beta_t,\mbf{X})$ is
\begin{equation}
    \beta_t = \frac{\mbf{x}^\HH_t\sum_{\ell=1}^L b_\ell\mbf{a}_\ell\mbf{a}_\ell^\HH\mbf{x}_t}{\sum_{i=1}^L b_i^2}.
\end{equation}
Thus, $\fbeam(\beta_t,\mbf{X})$ can be concentrated as
\begin{equation}
    \fbeam(\mbf{X}) = \frac{1}{L\tau}\sum_{t=1}^{\tau}\sum_{\ell=1}^L |\mbf{x}^\HH_t\mbf{A}_\ell\mbf{x}_t|^2
    \label{eq_fbeam}
\end{equation}
where 
\begin{equation*}
    \mbf{A}_\ell \overset{\Delta}{=} \frac{b_\ell\sum_{i=1}^L b_{i}\mbf{a}_{i}\mbf{a}_i^\HH}{\sqrt{L}\sum_{j=1}^L b_j^2} - \frac{\mbf{a}_\ell\mbf{a}_\ell^\HH}{\sqrt{L}}.
\end{equation*}
Note that $\mbf{A}_\ell = \mbf{A}_\ell^\HH$, but it is not necessarily a positive semi-definite matrix.

The structure of $\fbeam(\mbf{X})$ in \eqref{eq_fbeam} and the separate communication constraints allow us to decompose problem \eqref{eq_problem1_beam}
into $\tau$ separate problems:
\begin{equation}
\begin{aligned}
    &\minimize_{\{\mbf{x}_t\}}  &&  \fbeam(\mbf{x_t}) = \frac{1}{L}\sum_{\ell=1}^L |\mbf{x}^\HH_t\mbf{A}_\ell\mbf{x}_t|^2\\
    & \st && \eqref{eq:beam_constrain1} \;\text{and}\; \eqref{eq:beam_constrain2}.
    \label{eq_problem1_beamt}
\end{aligned}
\end{equation}
This means that we can separately solve $\tau$ problems for the $\tau$ transmit signal vectors $\mbf{x}_1,\,\ldots,\,\mbf{x}_\tau$. 
Before addressing the solution to~\eqref{eq_problem1_beamt}, we provide the following important remarks about $\fbeam(\mbf{x}_t)$:
\begin{itemize}
    \item \textit{Remark 1:} $\fbeam(\mbf{x}_t)$ is a quartic function of $\mbf{x}_t$, which is non-convex and therefore has many local minima.
    \item \textit{Remark 2:} $\fbeam(\mbf{x}_t)$ does not depend on the communication channel $\mbf{H}$ and the users' data symbols $\mbf{s}$.
    \item \textit{Remark 3:}  The transmit signal $\mbf{x}_t$ can undergo an arbitrary phase modulation without impacting the radar performance, i.e., $\fbeam(\mbf{x}_t) = \fbeam(e^{\jj\varphi_t}\mbf{x}_t)$ $\forall \varphi_t$, a fact that will be exploited later in Section~\ref{sec:DFRC_solution} for efficient DFRC signal design.
\end{itemize}

Based on these properties, we will approach finding a solution to~\eqref{eq_problem1_beamt} using a two-step procedure. First, we find a set of $D$ locally optimal ``radar-only'' solutions to~\eqref{eq_problem1_beamt} by ignoring the communication constraints~\eqref{eq:beam_constrain1} and~\eqref{eq:beam_constrain2}. Such a set of solutions is easily found due to Remark~1 and is independent of the time index $t$ due to Remark~2, and thus can be calculated offline. We refer to this initial set of offline solutions as $\mca{X}^{\radar} = \{\mbf{x}_1^{\radar},\,\ldots,\,\mbf{x}_D^{\radar}\}$. Then, for each radar-only solution $\mbf{x}_d^{\radar} \in \mca{X}^{\radar}$, we exploit Remark~3 to efficiently find a DFRC solution $\mbf{x}_{d,t}^{\RC}$ that satisfies the communication constraints. Finally, among the $D$ DFRC solutions $\{\mbf{x}_{1,t}^{\RC},\,\ldots,\,\mbf{x}_{D,t}^{\RC}\}$, we choose the one that gives the best radar performance as the DFRC transmit signal $\mbf{x}_t^{\RC}$ at time $t$. We use the superscripts `$\mrm{R}$' and `$\mrm{RC}$' to indicate `radar-only' and `radar-communication', respectively. 

Details of our proposed SNC-based solution are presented below. In this and subsequent discussions in later sections for different radar performance metrics, we will drop the subscript $t$ indicating the symbol time index to simplify the notation.

\subsection{Construction of $\mca{X}^{\radar}$}
\label{sec:radar_solutions}
The problem of interest is
\begin{equation}
\begin{aligned}
    &\minimize_{\{\mbf{x}\}}  &&   \sum_{\ell=1}^L |\mbf{x}^\HH\mbf{A}_\ell\mbf{x}|^2\\
    & \st && \eqref{eq:beam_constrain2},
    \label{eq_problem2}
\end{aligned}
\end{equation}
which is non-convex and thus has many local optima, as mentioned earlier in Remark 1. Non-convex optimization problems with many local stationary points are generally considered to be challenging. However, in this paper, we exploit the non-convex structure to construct the set $\mca{X}^{\radar}$. First we use a set of random initializations to find a set of $\tilde{D}$ stationary points $\mca{\Tilde{X}}^{\radar}$, where $\tilde{D} \gg D$. Then, from $\mca{\Tilde{X}}^{\radar}$, we choose the $D$ solutions that give the best radar performance to obtain the set $\mca{X}^\radar$. The $\tilde{D}$ random initialization points are updated by moving along the opposite direction of the Riemannian gradient until convergence. This can be easily achieved since the gradient-based method guarantees convergence to a local optimum. The Riemannian gradient of a function $f(\mbf{x})$ can be obtained by projecting the Euclidean gradient $\nabla f(\mbf{x})$ onto the tangent space of the complex unit circle as follows: 
$$
\nabla_{\mrm{Rie}} f(\mbf{x}) = \nabla f(\mbf{x})  - \Re\left\{\nabla f(\mbf{x}) \odot\mbf{x}^*\right\}\odot\mbf{x},
$$
where $\odot$ denotes the element-wise multiplication operator, and $\nabla f(\mbf{x})$ is computed as 
\begin{equation*}
    \nabla f(\mbf{x}) = 2\sum_{\ell=1}^L \mbf{x}^\HH\mbf{A}_\ell\mbf{x}\frac{\partial \big(\mbf{x}^\HH\mbf{A}_\ell\mbf{x}\big)}{\partial\mbf{x}^\HH} = 2\sum_{\ell=1}^L \mbf{x}^\HH\mbf{A}_\ell\mbf{x}\mbf{A}_\ell\mbf{x}.
\end{equation*}
The search for a local radar optimum is implemented in the following iterative manner:
$$
\mbf{x} \leftarrow \sqrt{P/N}\, \mrm{Proj} \left(\mbf{x} - \eta_1\nabla_{\mrm{Rie}} f\big(\mbf{x}\big)\right) \; ,
$$
where $\eta_1$ is the step size. 

Since the local radar solutions in $\mca{\Tilde{X}}^{\radar}$ are found with random initializations, some may be identical or they may only differ by a complex rotation $e^{\jj\varphi}$ as indicated in Remark 3. Thus, we eliminate these duplicate solutions from $\mca{\Tilde{X}}^{\radar}$ before choosing the $D$ best solutions to form the set $\mca{X}^\radar$. In particular, we treat any pair $\mbf{x}_{\Tilde{d}}$ and $\mbf{x}_{\Tilde{d}'}$ with $\Tilde{d} \neq \Tilde{d}'$ as duplicates if $\operatorname{Var}(\diag(\mbf{x}_{\Tilde{d}})^{-1}\mbf{x}_{\Tilde{d}'})\leq \epsilon$ for some small threshold $\epsilon$, where $\operatorname{Var}(\mbf{x})$ defines the sample variance of the elements in vector~$\mbf{x}$. 

\subsection{Design of DFRC Signal $\mbf{x}^{\RC}$}
\label{sec:DFRC_solution}
Since the DFRC signal ${x}^{\RC}_n$ and the radar signal ${x}^{\radar}_n$ both have the same constant modulus, we can write
\begin{equation}\label{eq:RCsol}
    \mbf{x}^{\RC} = \diag(\mbf{x}^{\radar})\bsm{{\phi}},
\end{equation}
where $\phi_n = e^{\jj\varphi_n}, \varphi_n \in [0, 2\pi], \forall n$. Thus, given a radar solution $\mbf{x}^{\radar}$, the design of $\mbf{x}^{\RC}$ is equivalent to the design of the vector of rotations $\bsm{{\phi}}$, which are found such that $\mbf{x}^{\RC}$ satisfies the communication constraints~\eqref{eq:beam_constrain1} and~\eqref{eq:beam_constrain2}. To avoid adversely affecting the radar performance, we decompose $\bsm{{\phi}}$ into two components as follows:
\begin{equation}
    \bsm{{\phi}} = e^{\jj\varphi}\bsm{{\tilde{\phi}}},
    \label{eq:rotation_decompose}
\end{equation}
where $\bsm{{\tilde{\phi}}}$ is desired to be as close as possible to $\mbf{1}_N$, the $N$-dimensional vector whose elements all equal one. If $\bsm{{\tilde{\phi}}}=\mbf{1}_N$, then the DFRC signal in \eqref{eq:RCsol} will have the same radar performance as $\mbf{x}^{\radar}$ due to Remark~3. Thus, the basic idea of our algorithm presented next is to design $\varphi$ and a vector $\bsm{{\tilde{\phi}}}$ close to $\mbf{1}_N$ to ensure the communication constraints. The steps of the algorithm are outlined below.

\subsubsection{Optimize $\varphi$ for a given $\bsm{\Tilde{\phi}}$}
\label{sec_design_varphi}
Using the decomposition in~\eqref{eq:rotation_decompose}, the DFRC signal $\mbf{x}^\RC$ can be written as
$$
\mbf{x}^\RC = e^{\jj\varphi}\diag(\mbf{x}^{\radar})\bsm{{\tilde{\phi}}}.
$$
The rotated noiseless received signal is then given as
$$
\mbf{z} = \diag(\mbf{s}^*)\mbf{H}\mbf{x}^\RC = e^{\jj\varphi}\mbf{Q}\bsm{{\tilde{\phi}}} = e^{\jj{\varphi}}\mbf{\tilde{s}},
$$
where $\mbf{Q} = \diag(\mbf{s}^*)\mbf{H}\diag(\mbf{x}^{\radar})$ and $\mbf{\tilde{s}} = \mbf{Q}\bsm{\tilde{\phi}}$. 

Let $\mbf{\tilde{s}} = [\tilde{s}_1,\,\ldots,\,\tilde{s}_{U}]^\TT$. The safety margin of user $u$ can be written as a function of ${\varphi}$ as follows: 
\begin{align}
    \delta_u(\varphi) &= \Re\{e^{\jj\varphi}\tilde{s}_u\}  \sin(\pi/M) - |\Im\{e^{\jj\varphi}\tilde{s}_u\} |\cos(\pi/M).
\end{align}
To ensure the constraints in~\eqref{eq:beam_constrain1}, our objective will be to find an angle $\varphi$ such that $\min_{u}\, \big(\delta_u(\varphi) - \gamma_u\big)$ is maximized. If $\varphi$ can make $\min_{u}\, \big(\delta_u(\varphi) - \gamma_u\big) \geq 0$, then the communication constraints are satisfied. Hence, we need to solve the following optimization problem:
\begin{equation}
\begin{aligned}
    \maximize_{0 \leq\varphi  \leq 2\pi}  \;\;  \min_{u}\, \big(\delta_u(\varphi) - \gamma_u\big).
\end{aligned}
\label{eq:varphi_optimize}
\end{equation}
This is a non-convex problem but it can be solved easily since it involves only the single variable $\varphi$ which is constrained to lie in $[0, 2\pi]$. We propose an efficient algorithm to solve~\eqref{eq:varphi_optimize} as follows.

First, we uniformly sample the range $[0, 2\pi]$ to obtain $C$ angle samples $\{\check{\varphi}_1,\,\ldots,\,\check{\varphi}_C\}$, i.e., $\check{\varphi}_c = 2\pi(c-1)/(C-1)$. Then, a coarse solution for $\varphi$ can be found as the value in $\{\check{\varphi}_1,\,\ldots,\,\check{\varphi}_C\}$ that yields the largest $\min_{u}\, \big(\delta_u(\check{\varphi_c}) - \gamma_u\big)$, i.e.,
\begin{equation}
    \varphi^{\mrm{coarse}} = \argmax_{\{\check{\varphi}_c\}} \; \min_{u}\, \big(\delta_u(\check{\varphi_c}) - \gamma_u\big).
    \label{eq:varphi_coarse}
\end{equation}
Next, starting from the coarse solution $\varphi^{\mrm{coarse}}$, a refined solution  $\varphi^{\mrm{fine}}$ can be obtained by moving along the gradient direction until convergence:
\begin{equation}
    \varphi^{\mrm{fine}} \leftarrow \varphi^{\mrm{fine}} + \eta_2\frac{\partial \delta_{\bar{u}}(\varphi^{\mrm{fine}})}{\partial \varphi},
    \label{eq:varphi_update}
\end{equation}
where $\eta_2$ is the step size, $\Bar{u}$ is the user index for which $\Bar{u} = \argmin_{u} \big(\delta_u(\varphi^{\mrm{fine}}) - \gamma_u\big)$, and the gradient is given as
\begin{align}
    \frac{\partial \delta_{\Bar{u}}(\varphi)}{\partial \varphi}
    & = (-\sin(\varphi)\Re\{\tilde{s}_{\Bar{u}}\} - \cos(\varphi)\Im\{\Tilde{s}_{\Bar{u}}\})\sin(\pi/M) \;- \notag \\
    &\qquad \frac{\cos(\varphi)\Im\{\tilde{s}_{\Bar{u}}\} + \sin(\varphi)\Re\{\Tilde{s}_{\Bar{u}}\}}{|\cos(\varphi)\Im\{\tilde{s}_{\Bar{u}}\} + \sin(\varphi)\Re\{\Tilde{s}_{\Bar{u}}\}|}\;\times \notag \\&\qquad (-\sin(\varphi)\Im\{\tilde{s}_{\Bar{u}}\} + \cos(\varphi)\Re\{\Tilde{s}_{\Bar{u}}\})\cos(\pi/M).\notag
\end{align}
We choose to find $\varphi_{\mrm{fine}}$ from the coarse solution in~\eqref{eq:varphi_coarse} instead of directly from a random sample in $[0, 2\pi]$ because the objective function in~\eqref{eq:varphi_optimize} is non-convex. Directly moving along the gradient direction from a random sample in $[0, 2\pi]$ will likely lead to a local solution.

\subsubsection{Update $\bsm{\tilde{\phi}}$ for a given $\varphi$}
Let $\mbf{Q} = [\mbf{q}_1,\,\ldots,\,\mbf{q}_U]^\HH$ and
\begin{align*}
\mbf{\tilde{q}}_{2u} &= e^{-\jj{\varphi}}\mbf{q}_u\big[\sin(\pi/M)+e^{\jj\pi/2}\cos(\pi/M)\big],\\
\mbf{\tilde{q}}_{2u-1} &= e^{-\jj{\varphi}}\mbf{q}_u\big[\sin(\pi/M)-e^{\jj\pi/2}\cos(\pi/M)\big].
\end{align*}
Then the constraint $\delta_u \geq \gamma_u$ is equivalent to the following two conditions:
\begin{align}
    &\Re\big\{\mbf{\tilde{q}}_{2u}^\HH\bsm{\Tilde{\phi}}\big\} \geq \gamma_{u}\label{eq:cond1}\\
    &\Re\big\{\mbf{\tilde{q}}_{2u-1}^\HH\bsm{\Tilde{\phi}}\big\} \geq \gamma_{u}.\label{eq:cond2}
\end{align}
Here, $\Re\big\{\mbf{\tilde{q}}_{2u}^\HH\bsm{\Tilde{\phi}}\big\}$ and $\Re\big\{\mbf{\tilde{q}}_{2u-1}^\HH\bsm{\Tilde{\phi}}\big\}$ are the margins to the decision boundaries~1 and~2 in Fig.~\ref{fig_safety_margine_illustration}, respectively.

Consider a function of $\bsm{\Tilde{\phi}}$ as follows:
\begin{equation}
    \xi_{u'}(\bsm{\Tilde{\phi}}) = \Re\big\{\mbf{\tilde{q}}_{u'}^\HH\bsm{\Tilde{\phi}}\big\} - \Tilde{\gamma}_{u'},
\end{equation}
where $u' = 1,\,\ldots,\,2U$ and $\Tilde{\gamma}_{u'} = \gamma_u$ if $u' = 2u$ or $u' = 2u-1$, respectively. We need to achieve $\min_{u'} \xi_{u'}(\bsm{\Tilde{\phi}}) \geq 0$ because the two conditions~\eqref{eq:cond1} and~\eqref{eq:cond2} are satisfied for all $u$ when $\min_{u'} \xi_{u'}(\bsm{\Tilde{\phi}}) \geq 0$. Therefore, we update $\bsm{\Tilde{\phi}}$ by moving along the Riemannian gradient of the function $\xi_{\bar{u}}(\bsm{\Tilde{\phi}})$, where 
$$
\bar{u} = \argmin_{u' = 1,\, \ldots,\, 2U} \xi_{{u'}}(\bsm{\Tilde{\phi}}).
$$
The Riemannian gradient of $\xi_{\Bar{u}}(\bsm{\Tilde{\phi}})$ is given as
$$
\nabla_{\mrm{Rie}}\;\xi_{\Bar{u}}(\bsm{\Tilde{\phi}}) = \mbf{\tilde{q}}_{\bar{u}} - \Re\left\{\mbf{\tilde{q}}_{\bar{u}}\odot\bsm{\Tilde{\phi}}^*\right\}\odot\bsm{\Tilde{\phi}},
$$
and thus an update of $\bsm{\Tilde{\phi}}$ can be obtained as follows:
\begin{equation}
    \bsm{\Tilde{\phi}} \leftarrow \mrm{Proj} \left(\bsm{\Tilde{\phi}} + \eta_3\nabla_{\mrm{Rie}}\;\xi_{\Bar{k}}\big(\bsm{\Tilde{\phi}}\big)\right),
    \label{eq:phi_update}
\end{equation}
where $\eta_3$ is a given step size.

It is important to note that, for a given $\varphi$, the update of $\bsm{\Tilde{\phi}}$ in~\eqref{eq:phi_update} is implemented only once as described in Algorithm~\ref{algo_R2DFRC}, since a change in $\bsm{\Tilde{\phi}}$ affects the radar performance while a change in $\varphi$ does not. In addition, $\bsm{\Tilde{\phi}}$ is desired to be as close to $\mbf{1}_N$ as possible to avoid radar performance loss. Therefore, we should pursue a minimal number of updates on $\bsm{\Tilde{\phi}}$ and maximally exploit the flexibility provided by $\varphi$. Updates for $\bsm{\Tilde{\phi}}$ and $\varphi$ are terminated as soon as the communication constraints are all satisfied. The final DFRC signal design $\mbf{x}^\RC$ from Algorithm~\ref{algo_R2DFRC} is the one in $\{\mbf{x}_1^{\RC},\,\ldots,\,\mbf{x}_D^{\RC}\}$ that gives the best radar performance, i.e., $\mbf{x}^\RC = \argmin_{\{\mbf{x}^\RC_d\}}\; f(\mbf{x}^\RC_d)$.

\begin{algorithm}[t!]
\caption{Proposed SNC-based DFRC Signal Design Algorithm.}
\label{algo_R2DFRC}
    \begin{algorithmic}[1]
        \REQUIRE $\mca{X}^{\radar}$, $\mbf{H}$, and $\mbf{s}$
        \ENSURE $\mbf{x}^{\RC}$
        \FOR{$d = 1,\, \ldots,\, D$}
        \STATE Initialize $\bsm{\Tilde{\phi}}_d = \mbf{1}_N$\label{algo1_line2}
        \STATE Compute $\mathbf{Q}_d = \diag(\mbf{s}^*)\mbf{H}\diag(\mbf{x}^{\radar}_d)$\label{algo1_line3}
        \WHILE{constraints $\delta_u \geq \gamma_u$ are not satisfied $\forall u$}
                \STATE Compute $\mbf{\tilde{s}}_d = \mbf{Q}_d\bsm{\tilde{\phi}}_d$\label{algo1_line5}
                \STATE Find $\varphi_d^{\mrm{coarse}}$ by~\eqref{eq:varphi_coarse} then set $\varphi_{d}^{\mrm{fine}} = \varphi_{d}^{\mrm{coarse}}$\label{algo1_line6}
                \STATE Update $\varphi^\mrm{fine}_d$ by~\eqref{eq:varphi_update} until convergence\label{algo1_line7}
                \STATE Set $\varphi_d = \varphi_{d}^{\mrm{fine}}$

                \STATE If $\delta_u \geq \gamma_u$ $\forall u$, then
                \textbf{exit} the while loop\label{algo1_line9}
                
                \STATE Update $\bsm{\Tilde{\phi}}_d$ by~\eqref{eq:phi_update}\label{algo1_line10}
    
                \STATE If $\delta_u \geq \gamma_u$ $\forall u$, then
                \textbf{exit} the while loop\label{algo1_line11}
                
        \ENDWHILE
        \STATE Set $\mbf{x}^\RC_d = e^{\jj\varphi_d}\diag(\mbf{x}^{\radar}_d)\bsm{{\tilde{\phi}}}_d$
        \ENDFOR
        \RETURN $\mbf{x}^\RC = \argmin_{\{\mbf{x}^\RC_d\}}\; f(\mbf{x}^\RC_d)$
    \end{algorithmic}
\end{algorithm}

\section{Radar SINR-Based Design}
\label{sec:SINR}
In the previous section, we showed that Remarks~1-3 hold for the beampattern metric, based on which we developed an efficient SNC-based transmit DFRC signal design. In this section, we show that Remarks~1-3 also hold for the radar SINR metric. Thus, the SNC-based method proposed in the previous section can be adapted to solve the problem of optimizing the radar SINR under communication constraints.

To derive the SINR for target $k$, we vectorize the received signal matrix $\mbf{Y}_{\rrm}$ to obtain
\begin{align}
    \mbf{y}_{\rrm} &= \sum_{k=1}^{K} \alpha_{k} \big(\mbf{I}_\tau\otimes(\mbf{a}_{k}\mbf{a}_{k}^\HH)\big)\mbf{x} + \mbf{n}_{\rrm} =\sum_{k=1}^{K} \alpha_{k} \mbf{B}_{k}\mbf{x} + \mbf{n}_{\rrm} ,
\end{align}
where $\mbf{y}_{\rrm} = \vectorize{(\mbf{Y}_{\rrm})}$, $\mbf{B}_{k} = \mbf{I}_\tau\otimes(\mbf{a}_{k}\mbf{a}_{k}^\HH)$, $\mbf{x} = \vectorize{(\mbf{X})}$, and $\mbf{n}_{\rrm} = \vectorize{(\mbf{N}_{\rrm})}$. The SINR for target $k$ is thus 
\begin{equation}
    f_{\SINR_k}(\mbf{x},\mbf{w}_{k}) = \displaystyle\frac{\sigma_{k}^2|\mbf{w}_{k}^\HH\mbf{B}_{k}\mbf{x}|^2}{\mbf{w}_{k}^\HH\left(\sum_{i\neq \ell_k} \sigma^2_i\mbf{B}_{i}\mbf{x}\mbf{x}^\HH\mbf{B}_{i}^\HH + \sigma^2_\rrm\mbf{I}\right)\mbf{w}_{k}},
    \label{eq:SINR_kt}
\end{equation}
where $\ell_k \in \mca{L}$ is the angle index of target $k$ in $\mca{L}$ and $\mbf{w}_k$ is the radar receive filter for target $k$. The location indices corresponding to other targets and clutter are in the set $\mca{L} \setminus \{\ell_k\}$. 

We want to design the transmit signal $\mbf{x}$ and the receive radar filters $[\mbf{w}_1,\, \ldots,\, \mbf{w}_{K}]$ so that the minimum radar SINR is maximized. Hence, the problem of interest is given as
\begin{equation}
\begin{aligned}
    &\maximize_{\{\mbf{x},\,\mbf{w}_1,\,\ldots,\,\mbf{w}_{K}\}}  &&  \min_{k}\,f_{\SINR_k}(\mbf{x},\mbf{w}_{k})\\
    & \st && \eqref{eq:beam_constrain1} \;\text{and}\; \eqref{eq:beam_constrain2}.
    \label{eq:problem2}
\end{aligned}
\end{equation}
For a given transmit signal vector $\mbf{x}$, problem~\eqref{eq:problem2} reduces to a minimum variance distortionless response optimization problem~\cite{BTang-TSP-2020,LWu-TSP-2018}
\begin{equation}
\begin{aligned}
    &\minimize_{\{\mbf{w}_{k}\}}  &&  \mbf{w}_{k}^\HH\bigg(\sum_{i\neq \ell_k} \sigma^2_i\mbf{B}_{i}\mbf{x}\mbf{x}^\HH\mbf{B}_{i}^\HH + \sigma^2_\rrm\mbf{I}\bigg)\mbf{w}_{k}\\
    & \st && \mbf{w}_{k}^\HH\mbf{B}_{k}\mbf{x} = 1,
    \label{eq:problem2_wkt}
\end{aligned}
\end{equation}
whose optimal solution $\mbf{w}_{k}$ is given as
\begin{equation}
    \mbf{w}_{k} = \frac{\big(\sum_{i\neq \ell_k} \sigma^2_{i}\mbf{B}_{i}\mbf{x}\mbf{x}^\HH\mbf{B}_{i}^\HH + \sigma^2_\rrm\mbf{I}\big)^{-1}\mbf{B}_{k}\mbf{x}}{\mbf{x}^\HH\mbf{B}_{k}^\HH\big(\sum_{i\neq \ell_k} \sigma^2_i\mbf{B}_{i}\mbf{x}\mbf{x}^\HH\mbf{B}_{i}^\HH + \sigma^2_\rrm\mbf{I}\big)^{-1}\mbf{B}_{k}\mbf{x}}.
    \label{eq:optimal_wkt}
\end{equation}
Substituting~\eqref{eq:optimal_wkt} into~\eqref{eq:SINR_kt} yields the SINR for target $k$ that only depends on the transmit signal $\mbf{x}$ as follows:
\begin{equation}
    f_{\SINR_k} (\mbf{x}) = \sigma^2_{k}\mbf{x}^\HH\mbf{B}_{k}^\HH\mbf{C}_{k}^{-1}\mbf{B}_{k}\mbf{x}\label{eq:SINRk},
\end{equation}
where
\begin{equation}
    \mbf{C}_{k} = \sum_{i\neq \ell_k} \sigma^2_i\mbf{B}_{i}\mbf{x}\mbf{x}^\HH\mbf{B}_{i}^\HH + \sigma^2_\rrm\mbf{I}.\label{eq:Ck}
\end{equation}
Thus, problem~\eqref{eq:problem2} can be written in the following form:
\begin{equation}
\begin{aligned}
    &\maximize_{\{\mbf{x}\}}  &&  \min_{k}\, f_{\SINR_k} (\mbf{x})\\
    & \st && \eqref{eq:beam_constrain1} \;\text{and}\; \eqref{eq:beam_constrain2},
    \label{eq:problem2_x}
\end{aligned}
\end{equation}
which now only depends on the transmit signal $\mbf{x}$.

The function $f_{\SINR_k} (\mbf{x})$ in~\eqref{eq:SINRk} has a complicated form that involves a matrix inversion, making it challenging to solve problem~\eqref{eq:problem2_x}. However, we will exploit the fact that Remarks~1-3 continue to hold for the SINR function $f_{\SINR_k} (\mbf{x})$, as described below:
\begin{itemize}
    \item $f_{\SINR_k}(\mbf{x})$ is non-concave w.r.t.\ $\mbf{x}$ and therefore has many local maxima.
    \item $f_{\SINR_k}(\mbf{x})$ does not depend on the communication channel $\mbf{H}$ and the users' data symbols $\mbf{s}$.
    \item If each signal vector $\mbf{x}_t$ is scaled by $e^{\jj\varphi_t}$ for any $\varphi_t$ and we let $\bsm{\Omega} = \diag(e^{\jj\varphi_1},\,\ldots,\,e^{\jj\varphi_\tau})$, then the rotated transmit signal $\mbf{\Tilde{x}} = \big(\bsm{\Omega}\otimes \mbf{I}_N\big) \mbf{{x}}$ does not change the radar SINR, i.e.,
    \begin{equation}
        f_{\SINR_k}(\mbf{\Tilde{x}}) = f_{\SINR_k}(\mbf{x}),\ \forall \varphi_1,\,\ldots,\,\varphi_\tau. \label{eq:SINR_invariance}
    \end{equation}
    The result in~\eqref{eq:SINR_invariance} can be easily obtained by calculating the SINR after right-multiplying the received signal $\mbf{Y}_{\mrm{r}}$ in~\eqref{eq:radar_rx_signal} with the rotation matrix $\bsm{\Omega}$.
\end{itemize}
Hence, we can directly adopt the SNC-based method proposed in the previous section to solve problem~\eqref{eq:problem2_x}. The main difference is in how the set of prior radar solutions is generated, as this requires calculation of the gradient of $f_{\SINR_k} (\mbf{x})$ w.r.t. $\mbf{x}$, as derived in the following lemma.
\begin{lemma}
    \label{lemma1}
    The Euclidean gradient of $f_{\SINR_k} (\mbf{x})$ w.r.t. $\mbf{x}$ is given as
    \begin{align}
        &\nabla f_{\SINR_k} (\mbf{x}) = \notag \\
        &\quad\sigma^2_{k}\Big(\mbf{B}_{k}^\HH - \sum_{i\neq \ell_k}\sigma^2_i(\mbf{x}^\HH\mbf{B}_{k}^\HH\mbf{C}_k^{-1}\mbf{B}_i\mbf{x})\mbf{B}_i^\HH\Big)\mbf{C}_k^{-1}\mbf{B}_{k}\mbf{x}.
        \label{eq:grad_eta_k}
    \end{align}
\end{lemma}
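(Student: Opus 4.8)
The plan is to differentiate the scalar function $f_{\SINR_k}(\mbf{x}) = \sigma^2_k\,\mbf{x}^\HH\mbf{B}_k^\HH\mbf{C}_k^{-1}\mbf{B}_k\mbf{x}$ with respect to $\mbf{x}^\HH$ (the Wirtinger convention used elsewhere in the paper, e.g. in the beampattern gradient), keeping in mind that $\mbf{C}_k$ in \eqref{eq:Ck} itself depends on $\mbf{x}$. So the derivative splits into two contributions via the product rule: one term where the differentiation hits the explicit quadratic form $\mbf{x}^\HH(\cdot)\mbf{x}$ treating $\mbf{C}_k$ as constant, and one term coming from the $\mbf{x}$-dependence inside $\mbf{C}_k^{-1}$.

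First I would handle the easy term: treating $\mbf{C}_k$ as fixed, $\partial\big(\mbf{x}^\HH\mbf{B}_k^\HH\mbf{C}_k^{-1}\mbf{B}_k\mbf{x}\big)/\partial\mbf{x}^\HH = \mbf{B}_k^\HH\mbf{C}_k^{-1}\mbf{B}_k\mbf{x}$, which supplies the $\sigma^2_k\mbf{B}_k^\HH\mbf{C}_k^{-1}\mbf{B}_k\mbf{x}$ piece of \eqref{eq:grad_eta_k}. Next I would handle the term from $\mbf{C}_k^{-1}$, using the identity $d(\mbf{C}_k^{-1}) = -\mbf{C}_k^{-1}(d\mbf{C}_k)\mbf{C}_k^{-1}$ together with $d\mbf{C}_k = \sum_{i\neq\ell_k}\sigma^2_i\mbf{B}_i(d\mbf{x})\mbf{x}^\HH\mbf{B}_i^\HH + \sum_{i\neq\ell_k}\sigma^2_i\mbf{B}_i\mbf{x}(d\mbf{x}^\HH)\mbf{B}_i^\HH$. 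Substituting into the differential of $f_{\SINR_k}$ and collecting only the terms proportional to $d\mbf{x}^\HH$ (the Wirtinger derivative picks out that part), the bracketed quadratic $\mbf{x}^\HH\mbf{B}_k^\HH\mbf{C}_k^{-1}\mbf{B}_i\mbf{x}$ appears as a scalar coefficient, and what remains acting on the right is $\mbf{B}_i^\HH\mbf{C}_k^{-1}\mbf{B}_k\mbf{x}$. Pulling out the common right factor $\mbf{C}_k^{-1}\mbf{B}_k\mbf{x}$ and the overall $\sigma^2_k$, the two contributions combine into exactly the form stated in the lemma. I would then note that the Riemannian gradient used in Section~\ref{sec:radar_solutions} is obtained by the same tangent-space projection as in the beampattern case.

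The main obstacle — really the only subtle point — is bookkeeping the $\mbf{x}$-dependence of $\mbf{C}_k$ correctly: one must not forget this term (a naive differentiation of a "quadratic form" would), and one must carefully track which of the two pieces of $d\mbf{C}_k$ contributes to the $d\mbf{x}^\HH$-part versus the $d\mbf{x}$-part, since only the former enters $\nabla f_{\SINR_k}$. A useful sanity check is that $\mbf{x}^\HH\nabla f_{\SINR_k}(\mbf{x})$ should reproduce $f_{\SINR_k}(\mbf{x})$ up to the expected scaling dictated by the degree of homogeneity, which also confirms the sign on the $\mbf{C}_k^{-1}$-derivative term.
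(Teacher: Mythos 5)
Your proposal is correct and follows essentially the same route as the paper's Appendix A: form the total differential of $\sigma^2_{k}\mbf{x}^\HH\mbf{B}_{k}^\HH\mbf{C}_k^{-1}\mbf{B}_{k}\mbf{x}$, apply $d\mbf{Z}^{-1}=-\mbf{Z}^{-1}(d\mbf{Z})\mbf{Z}^{-1}$ to the $\mbf{x}$-dependence hidden in $\mbf{C}_k$, and read off the coefficient of $d\mbf{x}^\HH$ as the Wirtinger/Euclidean gradient. One minor caveat: your homogeneity sanity check does not apply cleanly here, since the $\sigma^2_{\rrm}\mbf{I}$ term in $\mbf{C}_k$ breaks homogeneity of $f_{\SINR_k}$ in $\mbf{x}$, so $\mbf{x}^\HH\nabla f_{\SINR_k}(\mbf{x})$ differs from a fixed multiple of $f_{\SINR_k}(\mbf{x})$ by the extra term $-\sigma^2_{k}\sum_{i\neq \ell_k}\sigma^2_i|\mbf{x}^\HH\mbf{B}_{k}^\HH\mbf{C}_k^{-1}\mbf{B}_i\mbf{x}|^2$.
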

\begin{proof}
    See Appendix~\ref{appendix_1}.
\end{proof}
Since SINR-based objective is $\min_{k}\, f_{\SINR_k} (\mbf{x})$, the gradient for a given $\mbf{x}$ has to be taken for the target with the lowest SINR. To design a transmit DFRC signal vector $\mbf{x}_t$, we can use Algorithm~\ref{algo_R2DFRC} since, as noted in~\eqref{eq:SINR_invariance}, an arbitrary rotation $e^{\jj\varphi_t}$ to the transmit signal vector $\mbf{x}_t$ does not change the radar SINR.



\section{Radar CRLB-Based Design}
\label{sec:CRLB}
Thus far, we have exploited the SNC structure of both the radar beampattern and SINR mectrics to develop efficient DFRC signal designs. In this section, we demonstrate that the SNC property extends to the radar CRLB metric, enabling the proposed SNC-based method to be directly applied for CRLB-based designs.

First, let us denote $\bsm{\psi} = [\alpha_{1},\,\ldots,\,\alpha_{K},\,\vartheta_1,\,\ldots,\,\vartheta_K]^\TT$ as the vector of target parameters and let $\mbf{G} = \sum_{k=1}^{K} \alpha_{k} \mbf{a}_{k}\mbf{a}_{k}^\HH$, then we can write the received signal $\mbf{Y}_{\rrm}$ as follows:
\begin{align}
    &\mbf{Y}_{\rrm} = \mbf{G}\mbf{X} + \mbf{N}_{\rrm}.
\end{align}
We also vectorize the received signal $\mbf{Y}_{\rrm}$ to obtain
\begin{align}
    \mbf{y}_{\rrm} = (\mbf{X}^\TT \otimes \mbf{I}_N)\mbf{g} + \mbf{n}_{\rrm},
\end{align}
where $\mbf{g} = \vectorize{(\mbf{G})}$.


Given $\mbf{X}$ and $\bsm{\psi}$, the received signal $\mbf{y}_{\rrm}$ is distributed as $\mca{CN}(\bsm{\mu}, \sigma_{\mrm{r}}^2\mbf{I}_N)$ where $\bsm{\mu} = (\mbf{X}^\TT \otimes \mbf{I}_N)\mbf{g}$. The Fisher information matrix w.r.t. $\bsm{\psi}$ is given by~\cite{Pakrooh2015Analysis}
\begin{align}
    \mbf{F} &=  \left(\frac{\partial \bsm{\mu}}{\partial \bsm{\psi}}\right)^\HH (\sigma_{\mrm{r}}^2\mbf{I}_N)^{-1}\frac{\partial \bsm{\mu}}{\partial \bsm{\psi}} \notag \\
    &=\frac{1}{\sigma^2_{\mrm{r}}}\left(\frac{\partial \mbf{g}}{\partial \bsm{\psi}}\right)^\HH \big((\mbf{X}^*\mbf{X}^\TT)\otimes\mbf{I}_N\big) \frac{\partial \mbf{g}}{\partial \bsm{\psi}} \notag \\
    &= \frac{1}{\sigma^2_{\mrm{r}}}\mbf{T}^\HH \big((\mbf{X}^*\mbf{X}^\TT)\otimes\mbf{I}_N\big) \mbf{T} = \frac{1}{\sigma^2_{\mrm{r}}}\mbf{T}^\HH \mbf{\tilde{X}} \mbf{T},
\end{align}
where we have defined $\mbf{T} =
\frac{\partial \mbf{g}}{\partial \bsm{\psi}}$ and $\mbf{\tilde{X}} = (\mbf{X}^*\mbf{X}^\TT)\otimes\mbf{I}_N$.

While the result we describe holds for any array geometry, assume as an example that a uniform linear array (ULA) with half-wavelength antenna spacing is employed, so that the steering vector is expressed as 
$$\mbf{a}_{k} = \frac{1}{\sqrt{N}}[1,\,e^{\jj\pi\sin(\vartheta_{k})},\,\ldots,\,e^{\jj(N-1)\pi\sin(\vartheta_{k})}]^\HH.$$
Define
\begin{align*}
    \mbf{\dot{a}}_{k} = \frac{\partial \mbf{a}_{k}}{\partial \vartheta_{k}} &=\frac{1}{\sqrt{N}}[0,\,\jj\pi\cos(\vartheta_{k})e^{\jj\pi\sin(\vartheta_{k})},\,\ldots,\,\\
    &\qquad  \jj\pi(N-1)\cos(\vartheta_{k})e^{\jj\pi(N-1)\sin(\vartheta_{k})}]^\HH,
\end{align*}
and note that $\mbf{T}$ can then be written in the following form
\begin{equation}
    \mbf{T} = [\mbf{T}_\alpha \; \, \mbf{T}_\vartheta]
\end{equation}
where $\mbf{T}_\alpha$ and $\mbf{T}_\vartheta$ are given in~\eqref{eq:T_alpha} and~\eqref{eq:T_theta}, respectively. The Fisher information matrix $\mbf{F}$ can now be expressed as
\begin{equation}
    \mbf{F} = \frac{1}{\sigma^2_{\mrm{r}}}\begin{bmatrix}
        \mbf{T}_\alpha^\HH\mbf{\tilde{X}}\mbf{T}_\alpha & \mbf{T}_\alpha^\HH\mbf{\tilde{X}}\mbf{T}_\vartheta \\
        \mbf{T}_\vartheta^\HH\mbf{\tilde{X}}\mbf{T}_\alpha & \mbf{T}_\vartheta^\HH\mbf{\tilde{X}}\mbf{T}_\vartheta
    \end{bmatrix}.
\end{equation}
\begin{figure*}[t!]
    \begin{equation}
    \mbf{T}_\alpha = \left[ \begin{array}{ccc}
        \mbf{a}_{1} & & \mbf{a}_{K}\\
        e^{-\jj\pi\sin(\vartheta_{1})}\mbf{a}_{1} & & e^{-\jj\pi\sin(\vartheta_{K})}\mbf{a}_{K}\\
        \vdots & \cdots & \vdots \\
        e^{-\jj\pi(N-1)\sin(\vartheta_{1})}\mbf{a}_{1} & & e^{-\jj\pi(N-1)\sin(\vartheta_{K})}\mbf{a}_{K} \end{array}
    \right] \label{eq:T_alpha}
    \end{equation}
    \begin{equation}
    \mbf{T}_\vartheta = \left[ \begin{array}{ccc}
        \alpha_{1}\mbf{\dot{a}}_{1} & & \alpha_{K}\mbf{\dot{a}}_{K}\\
        \alpha_{1}e^{-\jj\pi\sin(\vartheta_{1})}(\mbf{\dot{a}}_{1}-\jj\pi\cos(\vartheta_{1})\mbf{a}_{1}) & & \alpha_{K}e^{-\jj\pi\sin(\vartheta_{K})}(\mbf{\dot{a}}_{K}-\jj\pi\cos(\vartheta_{K})\mbf{a}_{K})\\
        \vdots & \cdots & \vdots \\
        \alpha_{1}e^{-\jj\pi (N-1)\sin(\vartheta_{1})}(\mbf{\dot{a}}_{1}-\jj\pi(N-1)\cos(\vartheta_{1})\mbf{a}_{1}) & & \alpha_{K}e^{-\jj\pi (N-1)\sin(\vartheta_{K})}(\mbf{\dot{a}}_{K}-\jj\pi (N-1)\cos(\vartheta_{K})\mbf{a}_{K})
    \end{array} \right] .\label{eq:T_theta}
    \end{equation}
    \hrule
\end{figure*}

Since we are primarily interested in the target angles $\vartheta_1,\,\ldots,\,\vartheta_K$, the CRLB-based objective function will be $\tr((\mbf{F}^{-1})_{K+1:2K,K+1:2K})$. By the Schur complement, we have
\begin{equation}
    (\mbf{F}^{-1})_{K+1:2K,K+1:2K} = \sigma^2_{\mrm{r}}\mbf{F}_{\vartheta}^{-1}
\end{equation}
where
\begin{equation}
    \mbf{F}_{\vartheta} = \mbf{T}_\vartheta^\HH\mbf{\tilde{X}}\mbf{T}_\vartheta - \mbf{T}_\vartheta^\HH\mbf{\tilde{X}}\mbf{T}_\alpha(\mbf{T}_\alpha^\HH\mbf{\tilde{X}}\mbf{T}_\alpha)^{-1}\mbf{T}_\alpha^\HH\mbf{\tilde{X}}\mbf{T}_\vartheta.
\end{equation}
Hence, the CRLB-based optimization metric is
\begin{equation}
    \fCRLB(\mbf{X}) = \sigma^2_{\mrm{r}}\tr\big(\mbf{F}_{\vartheta}^{-1}\big)\label{eq:fCRLB_obj}
\end{equation}
and the CRLB-based DFRC signal design problem is
\begin{equation}
\begin{aligned}
    &\minimize_{\{\mbf{X}\}}  &&  \sigma^2_{\mrm{r}}\tr\big(\mbf{F}_{\vartheta}^{-1}\big)\\
    & \st && \eqref{eq:beam_constrain1} \;\text{and}\; \eqref{eq:beam_constrain2}.
    \label{eq:problem3}
\end{aligned}
\end{equation}

Again, we can see that the optimization problem~\eqref{eq:problem3} is challenging to solve due to the complicated form of the objective function in~\eqref{eq:fCRLB_obj}, which involves nested matrix inversions. However, the SNC-based method proposed above can be adopted in this case to efficiently solve the signal design problem~\eqref{eq:problem3}. This is because the objective function $\fCRLB(\mbf{X})$ is (\textit{i}) non-convex w.r.t. $\mbf{X}$ with many local minima, (\textit{ii}) it is independent of the channel $\mbf{H}$ and the users' data $\mbf{s}$, and (\textit{iii}) it is invariant to any rotation $\bsm{\Omega}$ applied to the transmit signal $\mbf{X}$, i.e., $\fCRLB(\mbf{X}\bsm{\Omega}) = \fCRLB(\mbf{X})$, which can be easily verified by replacing $\mbf{X}$ by $\mbf{\tilde{X}} = \mbf{X}\bsm{\Omega}$ in the objective function in~\eqref{eq:fCRLB_obj}.

To construct the set of radar solutions for the CRLB-based metric, the challenge is in the derivation of the gradient of $\fCRLB(\mbf{X})$ w.r.t. $\mbf{X}$, which is addressed in Lemma~\ref{lemma2} below.
\begin{lemma}
    \label{lemma2}
    Let $\bsm{\Upsilon}_\vartheta =\mbf{T}_\vartheta\mbf{F}_{\vartheta}^{-2}\mbf{T}_\vartheta^\HH$ and $\bsm{\Upsilon}_\alpha =(\mbf{T}_\alpha^\HH\mbf{\tilde{X}}\mbf{T}_\alpha)^{-1}$. The Euclidean gradient of $\fCRLB(\mbf{X})$ w.r.t. $\mbf{X}$ is given by
    \begin{align}
        &\nabla \fCRLB(\mbf{X}) = \sigma_{\mrm{r}}^2 \mbf{\tilde{V}}^\TT\mbf{X}
        \label{eq:grad_fCRLB}
    \end{align}
    where the $(n,m)$-th element of the matrix $\mbf{\tilde{V}}$ is given as $[\mbf{\tilde{V}}]_{n,m} = \tr(\mbf{V}_{(n-1)N+1:nN,(m-1)N+1:mN})$ and:
    \begin{align}
    \mbf{V} &= -\bsm{\Upsilon}_\vartheta + \mbf{T}_\alpha\bsm{\Upsilon}_\alpha\mbf{T}_\alpha^\HH\mbf{\tilde{X}}\bsm{\Upsilon}_\vartheta \;-\notag \\
    &\quad\;\mbf{T}_\alpha \bsm{\Upsilon}_\alpha\mbf{T}_\alpha^\HH\mbf{\tilde{X}}\bsm{\Upsilon}_\vartheta\mbf{\tilde{X}}\mbf{T}_\alpha \bsm{\Upsilon}_\alpha\mbf{T}_\alpha^\HH +  \bsm{\Upsilon}_\vartheta\mbf{\tilde{X}}\mbf{T}_\alpha \bsm{\Upsilon}_\alpha\mbf{T}_\alpha^\HH.\label{eq:E_matrix}
    \end{align}
\end{lemma}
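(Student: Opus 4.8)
The plan is to work throughout with matrix differentials, first obtaining a differential of $\fCRLB$ in the variable $\mbf{\tilde{X}}=(\mbf{X}^*\mbf{X}^\TT)\otimes\mbf{I}_N$ and then pushing it through the Kronecker structure to get a differential in $\mbf{X}$. Since $\fCRLB(\mbf{X})=\sigma_{\mrm{r}}^2\tr(\mbf{F}_\vartheta^{-1})$ and $\mbf{F}_\vartheta$ depends on $\mbf{X}$ only through $\mbf{\tilde{X}}$ (the matrices $\mbf{T}_\alpha,\mbf{T}_\vartheta$ depend only on the target parameters), I would begin from $d(\mbf{M}^{-1})=-\mbf{M}^{-1}(d\mbf{M})\mbf{M}^{-1}$ and the cyclic invariance of the trace, which give $d\fCRLB=-\sigma_{\mrm{r}}^2\tr\!\big(\mbf{F}_\vartheta^{-2}\,d\mbf{F}_\vartheta\big)$.

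The next step is to differentiate the Schur-complement expression $\mbf{F}_\vartheta=\mbf{T}_\vartheta^\HH\mbf{\tilde{X}}\mbf{T}_\vartheta-\mbf{T}_\vartheta^\HH\mbf{\tilde{X}}\mbf{T}_\alpha\bsm{\Upsilon}_\alpha\mbf{T}_\alpha^\HH\mbf{\tilde{X}}\mbf{T}_\vartheta$, treating $\mbf{T}_\alpha,\mbf{T}_\vartheta$ as constants. Using the product rule together with $d\bsm{\Upsilon}_\alpha=-\bsm{\Upsilon}_\alpha\big(\mbf{T}_\alpha^\HH(d\mbf{\tilde{X}})\mbf{T}_\alpha\big)\bsm{\Upsilon}_\alpha$ and writing $\mbf{P}\triangleq\mbf{T}_\alpha\bsm{\Upsilon}_\alpha\mbf{T}_\alpha^\HH$ for brevity, the four terms that appear combine into the compact form
\begin{equation*}
    d\mbf{F}_\vartheta=\mbf{T}_\vartheta^\HH(\mbf{I}-\mbf{\tilde{X}}\mbf{P})\,(d\mbf{\tilde{X}})\,(\mbf{I}-\mbf{P}\mbf{\tilde{X}})\mbf{T}_\vartheta .
\end{equation*}
Substituting this back, applying cyclic invariance to bring $d\mbf{\tilde{X}}$ to the right, and recalling $\bsm{\Upsilon}_\vartheta=\mbf{T}_\vartheta\mbf{F}_\vartheta^{-2}\mbf{T}_\vartheta^\HH$, I obtain $d\fCRLB=\sigma_{\mrm{r}}^2\tr(\mbf{V}\,d\mbf{\tilde{X}})$ with $\mbf{V}=-(\mbf{I}-\mbf{P}\mbf{\tilde{X}})\bsm{\Upsilon}_\vartheta(\mbf{I}-\mbf{\tilde{X}}\mbf{P})$; expanding this product and substituting $\mbf{P}=\mbf{T}_\alpha\bsm{\Upsilon}_\alpha\mbf{T}_\alpha^\HH$ reproduces exactly the expression for $\mbf{V}$ in~\eqref{eq:E_matrix}.

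Finally I would convert the differential in $\mbf{\tilde{X}}$ into one in $\mbf{X}$. From $\mbf{\tilde{X}}=(\mbf{X}^*\mbf{X}^\TT)\otimes\mbf{I}_N$ we get $d\mbf{\tilde{X}}=\big((d\mbf{X}^*)\mbf{X}^\TT+\mbf{X}^*(d\mbf{X}^\TT)\big)\otimes\mbf{I}_N$. The crucial ingredient is the partial-trace identity $\tr\!\big(\mbf{V}(\mbf{M}\otimes\mbf{I}_N)\big)=\tr(\mbf{\tilde{V}}\mbf{M})$ for any $N\times N$ matrix $\mbf{M}$, where $\mbf{\tilde{V}}$ is the block contraction of $\mbf{V}$ with $[\mbf{\tilde{V}}]_{n,m}=\tr\!\big(\mbf{V}_{(n-1)N+1:nN,\,(m-1)N+1:mN}\big)$; this follows by partitioning $\mbf{V}$ into $N\times N$ blocks and summing the block traces. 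It yields $d\fCRLB=\sigma_{\mrm{r}}^2\big(\tr(\mbf{\tilde{V}}(d\mbf{X}^*)\mbf{X}^\TT)+\tr(\mbf{\tilde{V}}\mbf{X}^*(d\mbf{X}^\TT))\big)$. Since the second and third terms of $\mbf{V}$ in~\eqref{eq:E_matrix} are conjugate transposes of one another and the other two are individually Hermitian, $\mbf{V}$ and hence $\mbf{\tilde{V}}$ are Hermitian, so the two trace terms are complex conjugates; the conjugate (Euclidean) gradient is read off from the $d\mbf{X}^*$ term, and rewriting $\tr(\mbf{\tilde{V}}(d\mbf{X}^*)\mbf{X}^\TT)=\tr(\mbf{X}^\TT\mbf{\tilde{V}}(d\mbf{X}^*))$ gives $\nabla\fCRLB(\mbf{X})=\sigma_{\mrm{r}}^2(\mbf{X}^\TT\mbf{\tilde{V}})^\TT=\sigma_{\mrm{r}}^2\mbf{\tilde{V}}^\TT\mbf{X}$, i.e.~\eqref{eq:grad_fCRLB}.

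I expect the main obstacle to be, beyond the bookkeeping of the product-rule expansion of $d\mbf{F}_\vartheta$ (which the compact factorization above tames), establishing and correctly applying the partial-trace reduction $\tr(\mbf{V}(\mbf{M}\otimes\mbf{I}_N))=\tr(\mbf{\tilde{V}}\mbf{M})$ with precisely the block indexing used to define $\mbf{\tilde{V}}$. One must also be careful with the Wirtinger convention so that the extracted gradient is $\partial\fCRLB/\partial\mbf{X}^*$ rather than $\partial\fCRLB/\partial\mbf{X}$, which is exactly what produces the transpose $\mbf{\tilde{V}}^\TT$.
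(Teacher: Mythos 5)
Your proposal is correct and follows essentially the same route as the paper's Appendix B: differentiate $\tr(\mbf{F}_\vartheta^{-1})$ via $d(\mbf{M}^{-1})=-\mbf{M}^{-1}(d\mbf{M})\mbf{M}^{-1}$, collect the Schur-complement terms into $\sigma_{\mrm{r}}^2\tr(\mbf{V}\,d\mbf{\tilde{X}})$, apply the block-trace identity to reduce to a differential in $\mbf{X}$, and read off the Wirtinger gradient from the $d\mbf{X}^*$ term. Your factorization $d\mbf{F}_\vartheta=\mbf{T}_\vartheta^\HH(\mbf{I}-\mbf{\tilde{X}}\mbf{P})(d\mbf{\tilde{X}})(\mbf{I}-\mbf{P}\mbf{\tilde{X}})\mbf{T}_\vartheta$ is a tidier packaging of the paper's four-term expansion (and correctly reproduces \eqref{eq:E_matrix}); the only nit is that the mutually conjugate-transposed terms of $\mbf{V}$ are the second and fourth, not the second and third, though your conclusion that $\mbf{V}$ is Hermitian stands.
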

\begin{proof}
    See Appendix~\ref{appendix_2}.
\end{proof}

Using the gradient in~\eqref{eq:grad_fCRLB}, we can construct the set of radar solutions for the CRLB-based optimization problem following the same approach described in Section~\ref{sec:radar_solutions}. As in the cases involving the beampattern and SINR metrics, given a set of radar-only solutions for the CRLB optimization, we can use Algorithm~\ref{algo_R2DFRC} to obtain the final DFRC signal design for the CRLB-based metric.

\section{Accelerated SNC-Based Design}
\label{sec:aR2DFRC}

In this section, we propose an accelerated method referred to as aSNC that further reduces the computational complexity of the SNC approach. Unlike SNC, which performs $D$ alternating optimizations (AO) between $\varphi$ and $\bsm{\tilde{\phi}}$ to obtain $D$ DFRC candidate solutions $\{\mbf{x}^\RC_d\}$, the proposed aSNC-based method first finds the most likely radar solution and performs AO only for it.

Specifically, we first find $\Bar{\varphi}_d$ as the solution of~\eqref{eq:varphi_optimize} when $\bsm{\Tilde{\phi}}_d = \mbf{1}_N$ for all $d$.
Then, we divide the set $\mca{D} = \{1,\,\ldots,\,D\}$ into two disjoint subsets as follows:
\begin{align}
    &\mca{D}_1 = \big\{d \in \mca{D} \mid \min_{u}\, \big(\delta_u(\Bar{\varphi}_d) - \gamma_u\big) < 0 \big\},\\
    &\mca{D}_2 = \big\{d \in \mca{D} \mid \min_{u}\, \big(\delta_u(\Bar{\varphi}_d) - \gamma_u\big) \geq 0\big\}.
\end{align}
Hence, $\mca{D}_1$ is the set of indices $d$ whose solution $(\varphi_d, \bsm{\tilde{\phi}}_d) = (\bar{\varphi}_d, \mbf{1}_N)$ does not satisfy the communication constraints, while $\mca{D}_2$ is the set of indices $d$ whose solution $(\varphi_d, \bsm{\tilde{\phi}}_d) = (\bar{\varphi}_d, \mbf{1}_N)$ does. This means only the pairs $\{(\varphi_{d_1}, \bsm{\tilde{\phi}}_{d_1})\}$ with $d_1 \in \mca{D}_1$ need to be further updated by the AO strategy presented above. However, instead of performing AO on $\{(\varphi_{d_1}, \bsm{\tilde{\phi}}_{d_1})\}$ for all $d_1 \in \mca{D}_1$, we propose to do so for only the pair $\{(\varphi_{\hat{d}_1}, \bsm{\tilde{\phi}}_{\hat{d}_1})\}$ for which
\begin{equation}
\begin{aligned}
    \hat{d}_1 = \argmax_{\{d_1\in \mca{D}_1\}}  \;\;  \min_{u}\, \big(\delta_u(\Bar{\varphi}_{d_1}) - \gamma_u\big).
\end{aligned}
\label{eq:m1_hat}
\end{equation}

The intuition behind this choice is that making $\min_{u}\, \big(\delta_u(\Bar{\varphi}_{{d}_1}) - \gamma_u\big)$ large will lead to a pair $(\bar{\varphi}_{{d}_1}, \mbf{1}_N)$ that satisfies the communication constraints with a minimal number of updates on $\bsm{\tilde{\phi}}_{{d}_1}$, resulting in a minimal deviation from $\mbf{1}_N$, and consequently a minimal radar performance loss. This observation motivates us to perform AO for only the pair $\{(\varphi_{\hat{d}_1}, \bsm{\tilde{\phi}}_{\hat{d}_1})\}$ to obtain a DFRC solution
$$
\mbf{x}^\RC_{\hat{d}_1} = e^{\jj\varphi_{\hat{d}_1}}\diag(\mbf{x}^{\radar}_{\hat{d}_1})\bsm{{\tilde{\phi}}}_{\hat{d}_1}.
$$

Since the solutions $(\varphi_{d_2}, \bsm{\tilde{\phi}}_{d_2}) = (\bar{\varphi}_{d_2}, \mbf{1}_N)$ with $d_2 \in \mca{M}_2$ already satisfy the communication constraints, their DFRC solutions are given as
$$
\mbf{x}^\RC_{d_2} = e^{\jj\Bar{\varphi}_{d_2}}\diag(\mbf{x}^{\radar}_{d_2})\mbf{1}_N = e^{\jj\Bar{\varphi}_{d_2}}\mbf{x}^{\radar}_{d_2}.
$$
The final solution of the aSNC-based method is then obtained using
\begin{equation}
    \mbf{x}^\RC = \argmin_{\{\mbf{x}^\RC_{\hat{d}_1},\,\mbf{x}^\RC_{\hat{d}_2}\}}\; \left\{f_{\mca{A}}(\mbf{x}^\RC_{\hat{d}_1}),\,f_{\mca{A}}(\mbf{x}^\RC_{\hat{d}_2})\right\},
    \label{eq:x_aR2DFRC}
\end{equation}
where $\hat{d}_2$ is the index in $\mca{D}_2$ whose DFRC solution $\mbf{x}^\RC_{\hat{d}_2}$ gives the best radar performance among the candidates in $\mca{D}_2$, i.e.,
\begin{equation}
\begin{aligned}
    \hat{d}_2 = \argmin_{\{d_2\in \mca{D}_2\}}  \;  f_\mca{A}(\mbf{x}^\RC_{d_2}).
\end{aligned}
\label{eq:m2_hat}
\end{equation}

\begin{algorithm}[t!]
\caption{Proposed aSNC-Based Method.}
\label{algo_acce_R2JRC}
    \begin{algorithmic}[1]
        \REQUIRE $\mca{X}^{\radar}$, $\mbf{H}$, and $\mbf{s}$
        \ENSURE $\mbf{x}^{\RC}$
        \FOR{$d = 1, \ldots, D$}
            \STATE Compute $\mathbf{Q}_d = \diag(\mbf{s}^*)\mbf{H}\diag(\mbf{x}^{\radar}_d)$
            \STATE Set $\bsm{\Tilde{\phi}}_d = \mbf{1}_N$ and compute $\mbf{\tilde{s}}_d = \mbf{Q}_d\bsm{\tilde{\phi}}_d$
            \STATE Find $\varphi_d^\mrm{coarse}$ by~\eqref{eq:varphi_coarse} then set $\varphi_d^\mrm{fine} = \varphi_d^\mrm{coarse}$
            \STATE Update $\varphi_d^\mrm{fine}$ by~\eqref{eq:varphi_update} until convergence
            \STATE Set $\Bar{\varphi}_d = \varphi_d^\mrm{fine}$
        \ENDFOR
        \STATE Obtain $\hat{d}_1$ by~\eqref{eq:m1_hat}
        \WHILE{constraints $\delta_u \geq \gamma_u$ are not satisfied $\forall u$}
                \STATE Update $\bsm{\Tilde{\phi}}_{\hat{d}_1}$ by~\eqref{eq:phi_update}
    
                \STATE If $\delta_u \geq \gamma_u$ $\forall u$, then
                \textbf{exit} the while loop\label{algo:stop21}

                \STATE Compute $\mbf{\tilde{s}}_{\hat{d}_1} = \mbf{Q}_{\hat{d}_1}\bsm{\tilde{\phi}}_{\hat{d}_1}$
                \STATE Find $\varphi_{\hat{d}_1}^{\mrm{coarse}}$ by~\eqref{eq:varphi_coarse} then set $\varphi_{\hat{d}_1}^{\mrm{fine}} = \varphi_{\hat{d}_1}^{\mrm{coarse}}$
                \STATE Update $\varphi^\mrm{fine}_{d_1}$ by~\eqref{eq:varphi_update} until convergence
                \STATE Set $\varphi_{\hat{d}_1} = \varphi_{\hat{d}_1}^{\mrm{fine}}$

                \STATE If $\delta_u \geq \gamma_u$ $\forall u$, then
                \textbf{exit} the while loop\label{algo:stop22}
        \ENDWHILE
        \RETURN $\mbf{x}^\RC$ by~\eqref{eq:x_aR2DFRC}
    \end{algorithmic}
\end{algorithm}

\section{Computational Complexity Analysis}
\label{sec:complexity}
\begin{table}[t!]
\centering
\renewcommand{\arraystretch}{1.3}
	\centering
	\caption{Computational complexity comparison between the proposed methods and those in \cite{RLiu-JSTSP-2021, JYan-WCNC-2022, RLiu-JSAC-2022}, where $I$ is the number of iterations.\label{table:complexities}}
	{\small 
	\begin{tabular}{|l|c|}
		\hline
		\textbf{Algorithm}& \textbf{Complexity} \\
		\hline
		\textbf{PDD-MM-BCD}~\cite{RLiu-JSTSP-2021} & $\mathcal{O}\big(IN^3(U+N)\big)$ \\
		\hline
		\textbf{ALM-RBFGS}~\cite{RLiu-JSTSP-2021} & $\mathcal{O}\big(I(N^3+UN)\big)$ \\
		\hline
		\textbf{LADMM}~\cite{JYan-WCNC-2022} &  $\mathcal{O}\big( U N^3 + U N^2 + U^2 N)\big)$ \\
		\hline
        \textbf{MM-neADMM}~\cite{RLiu-JSAC-2022} &  $\mathcal{O}\big(I(N^3+UN)\big)$ \\
		\hline
        \textbf{Proposed SNC-based method} &  $\mathcal{O}\big(DI(C+N)U\big)$ \\
		\hline
  \textbf{Proposed aSNC-based method}&  $\mathcal{O}\big((D+I)(C+N)U\big)$ \\
		\hline
	\end{tabular}}
\end{table}

Here we analyze the computational complexity of the proposed approaches. We note that Algorithms \ref{algo_R2DFRC} and \ref{algo_acce_R2JRC} apply to all the considered radar metrics (beampattern similarity, SINR, and CRLB). Although these metrics result in different radar-only solution sets $\mathcal{X}^\radar$ to initialize these algorithms, constructing $\mathcal{X}^\radar$ can be performed offline, and the complexity of this process is excluded from the analysis. Thus, the following complexity analysis for Algorithms \ref{algo_R2DFRC} and \ref{algo_acce_R2JRC} applies to all considered radar metrics.

In Algorithm \ref{algo_R2DFRC}, the computation of $\mathbf{Q}_m$ in line~\ref{algo1_line3} requires $\mca{O}(UN)$ operations and needs to be performed only once for each radar solution. The computation of $\mbf{\tilde{s}}_m$ in line~\ref{algo1_line5}, $\varphi_m^{\mrm{coarse}}$ in line~\ref{algo1_line6}, and $\varphi_m^{\mrm{fine}}$ in line~\ref{algo1_line7} incur a computational complexity of $\mca{O}(UN)$, $\mca{O}(UC)$, and $\mca{O}(UI^{\mrm{fine}})$, respectively, where $I^{\mrm{fine}}$ represents the number of iterations for convergence in line~\ref{algo1_line7}. The computational complexity of updating $\bsm{\tilde{\phi}}_m$ in line~\ref{algo1_line10} is $\mca{O}(UN)$. It is typical that $I^{\mrm{fine}} < C$ since the update of $\varphi_m^{\mrm{fine}}$ in line~\ref{algo1_line7} is just a refining step. Consequently, the overall complexity of this algorithm is $\mathcal{O}\big(ID(C+N)U\big)$ where $I$ denotes the total number of iterations for the \textbf{while} loop. Building on this, the complexity of Algorithm~\ref{algo_acce_R2JRC} can be expressed as $\mathcal{O}\big((D+I)(C+N)U\big)$ where $\mathcal{O}\big(DU(C+N)\big)$ and $\mathcal{O}\big(IU(C+N)\big)$ are the computational complexity of the \textbf{for} and \textbf{while} loops in Algorithm~\ref{algo_acce_R2JRC}, respectively.



A complexity analysis comparing the proposed and existing algorithms is given in Table~\ref{table:complexities}. We see that the complexity of the proposed algorithms scales only linearly with the number of BS antennas $N$, a rate that is significantly lower than for the existing approaches, which scale at least as $N^3$. The complexity of the approaches in~\cite{RLiu-JSTSP-2021,JYan-WCNC-2022, RLiu-JSAC-2022} scale with $N^3$ since each of their iterations still requires either the inverse or eigenvalue decomposition of an $N\times N$ matrix.

\section{Numerical Results}
\label{sec_results}

\begin{figure}[t!]
    \centering
    \begin{subfigure}[t]{\linewidth}
        \centering
        \includegraphics[width=\linewidth]{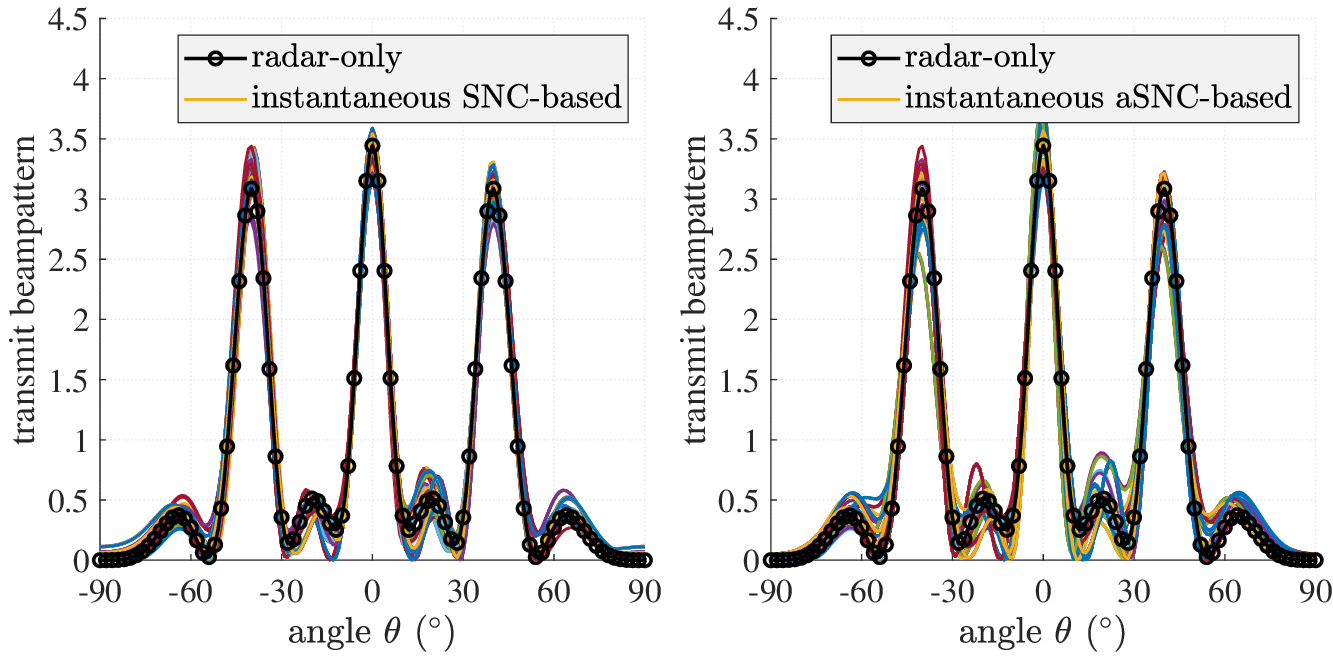}
        \caption{Transmit beampattern.}
        \label{fig_beampattern}
    \end{subfigure}
    
    \begin{subfigure}[t]{\linewidth}
        \centering
        \includegraphics[width=\linewidth]{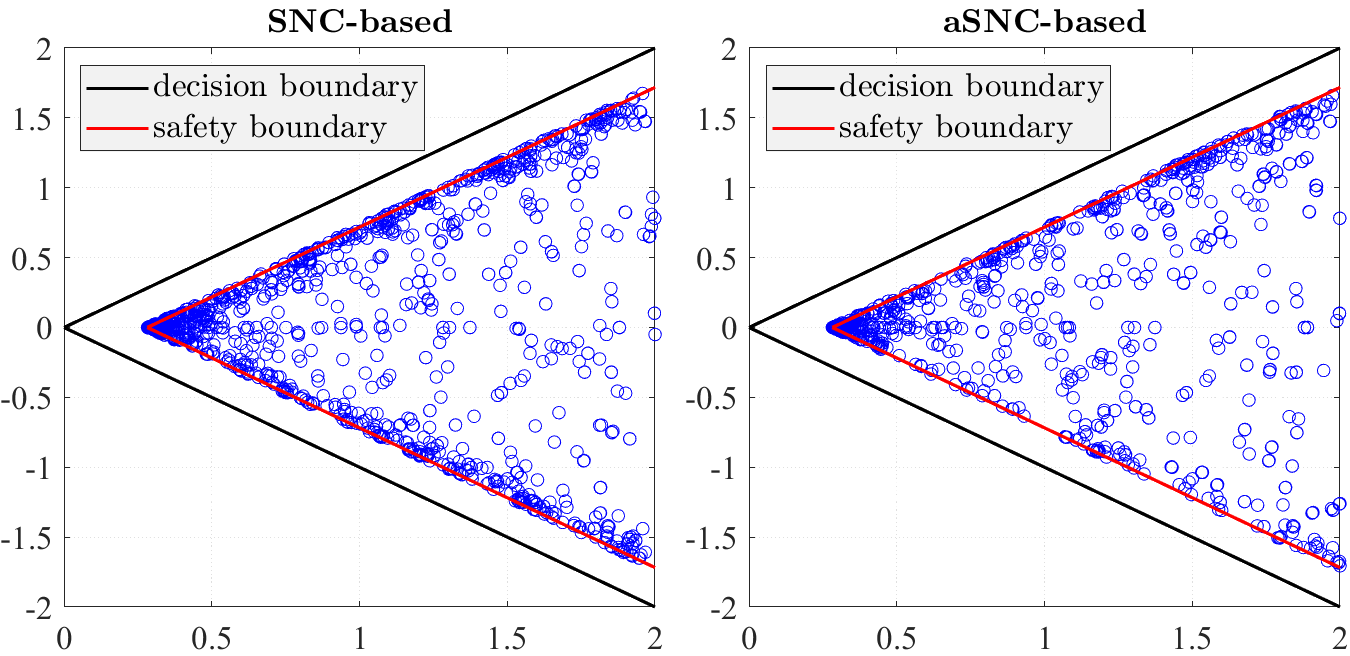}
        \caption{Rotated noiseless received signals}
        \label{fig_rx_signal}
    \end{subfigure}
    \belowcaptionskip = -10pt
    \caption{Transmit beampatterns and rotated noiseless received signals obtained by the proposed SNC and aSNC methods with $N = 10$, $U = 3$, $\Gamma = 9$ dB, and $P = 30$ dBm.}
    \label{fig_beam_vs_rx_signal}
\end{figure}

Here, we present numerical results to demonstrate the performance of the proposed algorithms and provide comparisons to previously proposed alternatives. We consider $K = 3$ targets, and unless otherwise specified, their angles are set to be $\{-40^\circ, 0^\circ, 40^\circ \}$. We also set $D=50$, $C = 100$, and $\{\theta_\ell\}$ to be the uniform angle samples in the beampattern between $-90^\circ$ and $90^\circ$ with a resolution of~$1^\circ$. The communication channels are modeled as $\mbf{h}_u\sim\mca{CN}(\mbf{0},\zeta_u\bsm{\Sigma}_u)$ where $\bsm{\Sigma}_u$ is the correlation matrix and $\zeta_u = \zeta_0 d_u^{-\nu}$ is the large-scale fading coefficient. We set $\zeta_0 = -30$ dB as the reference path loss and $\nu = 2.6$ as the path loss exponent. The distance $d_u$ between user $u$ and the BS is randomly generated between $100$ and $800$m and we set the distance beween the BS and the targets to be 3000m. We assume the noise power at the users is $-169$ dBm/Hz over a bandwidth of $1$ MHz. The transmit symbols are assumed to be 4-PSK, i.e., $M = 4$, and the safety margin thresholds $\gamma_u$ are set to be the same for all users as $\gamma_u = \gamma = \sigma_{\mrm{u}}\sin(\pi/M)\sqrt{\Gamma}$, where $\Gamma$ represents the SNR. We also set the step sizes $\eta_1 = 0.01$, $\eta_2 = 0.001$, and $\eta_3 = 0.005$.

\subsection{Beampattern}


\begin{figure}[t!]
    \centering
    \begin{subfigure}[t]{\linewidth}
        \centering
        \includegraphics[width=0.9\linewidth]{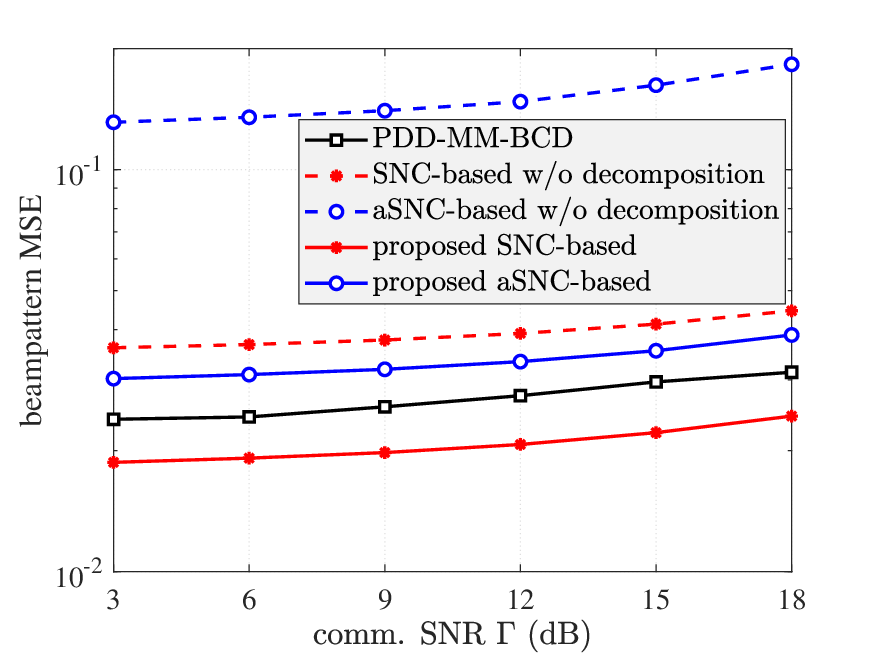}
        \caption{Beampattern MSE.}
        \label{fig_beam_MSE_vs_SNR}
    \end{subfigure}
    
    \begin{subfigure}[t]{\linewidth}
        \centering
        \includegraphics[width=0.9\linewidth]{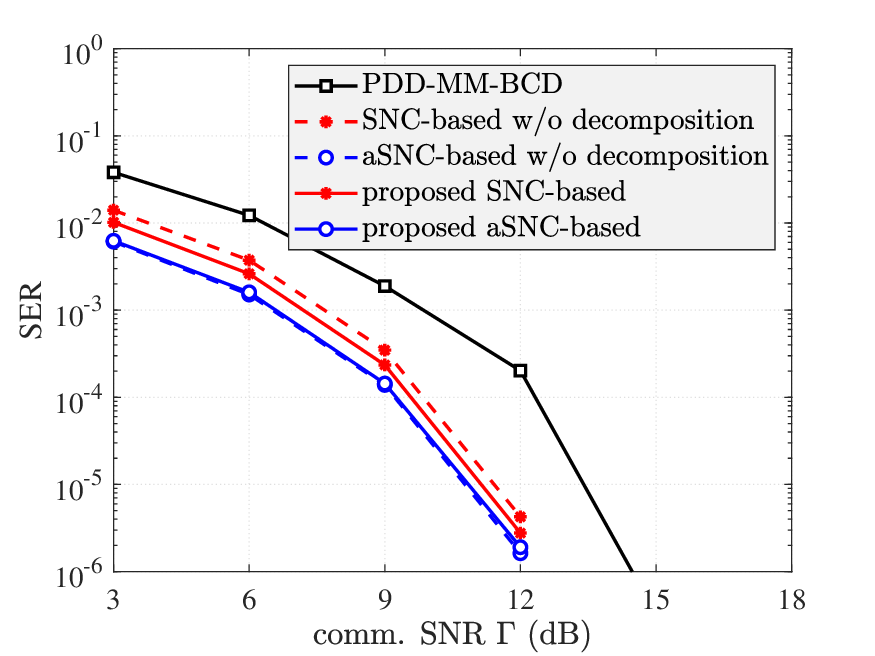}
        \caption{SER}
        \label{fig_beam_SER_vs_SNR}
    \end{subfigure}
    \belowcaptionskip = -15pt
    \caption{Comparison of beampattern MSE and communication SER for $N = 10$, $U = 3$, $P = 30$ dBm, and varying SNR constraint $\Gamma$.}
    \label{fig_beam_and_SER}
\end{figure}

First, we show the beampatterns and the rotated noiseless received signals of the proposed SNC-based and aSNC-based algorithms in Fig.~\ref{fig_beam_vs_rx_signal}. It can be seen that both algorithms create beampatterns that match the desired radar-only solution (Fig.~\ref{fig_beampattern}) while at the same time guaranteeing that the received signals at the users meet the required safety margin constraints. (Fig.~\ref{fig_rx_signal}). Fig.~\ref{fig_beampattern} also shows that, compared with aSNC, the beampatterns produced by SNC  align more closely with the desired radar-only result. This is reflected in Fig.~\ref{fig_beam_and_SER} which shows a comparison in terms of the communication symbol error rate (SER) and the beampattern mean squared error (MSE), which is defined as $\mbb{E}[\fbeam(\mbf{x})]$. Fig.~\ref{fig_beam_and_SER} also illustrates the performance of PDD-MM-BCD in~\cite{RLiu-JSTSP-2021} and a version of SNC and aSNC that does not use the decomposition in~\eqref{eq:rotation_decompose} but uses~\eqref{eq:phi_update} to directly find $\bsm{\phi}$. We use PDD-MM-BCD as the benchmark since it gives the best performance among other relevant approaches. Fig.~\ref{fig_beam_MSE_vs_SNR} shows that while the proposed SNC-based algorithm gives the best performance in terms of beampattern MSE, the versions SNC and aSNC that do not use the decomposition in~\eqref{eq:rotation_decompose} give the worst MSE. This verifies the effectiveness of~\eqref{eq:rotation_decompose}, which allows exploitation of the common phase $\varphi$. The MSE of the proposed aSNC algorithm is worse than SNC and PDD-MM-BCD since it only optimizes $\varphi$ and $\bsm{\tilde{\phi}}$ for the one most likely radar candidate. However, in terms of SER, the proposed aSNC approach offers the best performance while PDD-MM-BCD gives the highest SER. This is because aSNC creates larger safety margins while PDD-MM-BCD tends to make the users' safety margins equal the margin threshold, as can be seen in the cumulative distribution functions (cdf) in Fig.~\ref{fig:cdf}.

\begin{figure}[t!]
	\centering
	\includegraphics[width=0.9\linewidth]{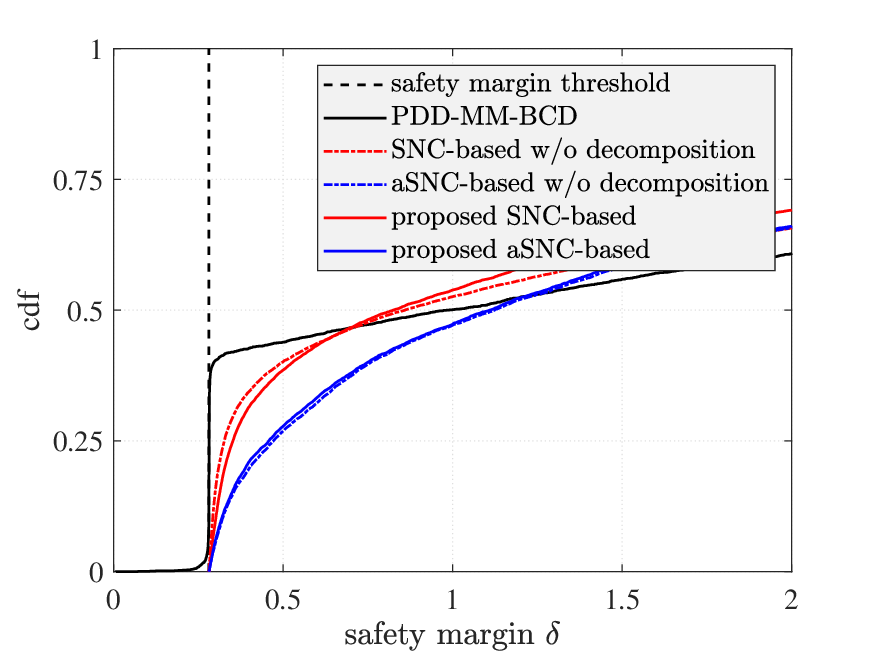}
    \belowcaptionskip = -10pt
	\caption{Empirical cdf of the safety margin for $N = 10$, $U = 3$, $\Gamma = 12$ dB and $P = 30$ dBm.}
	\label{fig:cdf}
\end{figure}

\subsection{SINR}
\begin{figure}[t!]
    \centering
    \begin{subfigure}[t]{0.9\linewidth}
        \centering
        \includegraphics[width=\linewidth]{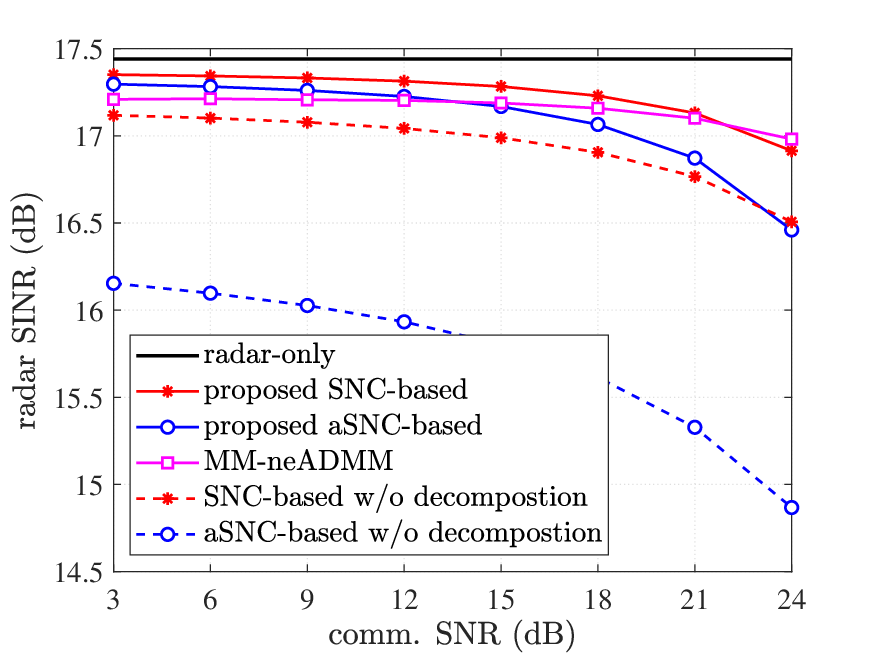}
        \caption{SINR performance comparison.}
        \label{fig_SINR_vs_SNR}
    \end{subfigure}
    
    \begin{subfigure}[t]{0.9\linewidth}
        \centering
        \includegraphics[width=\linewidth]{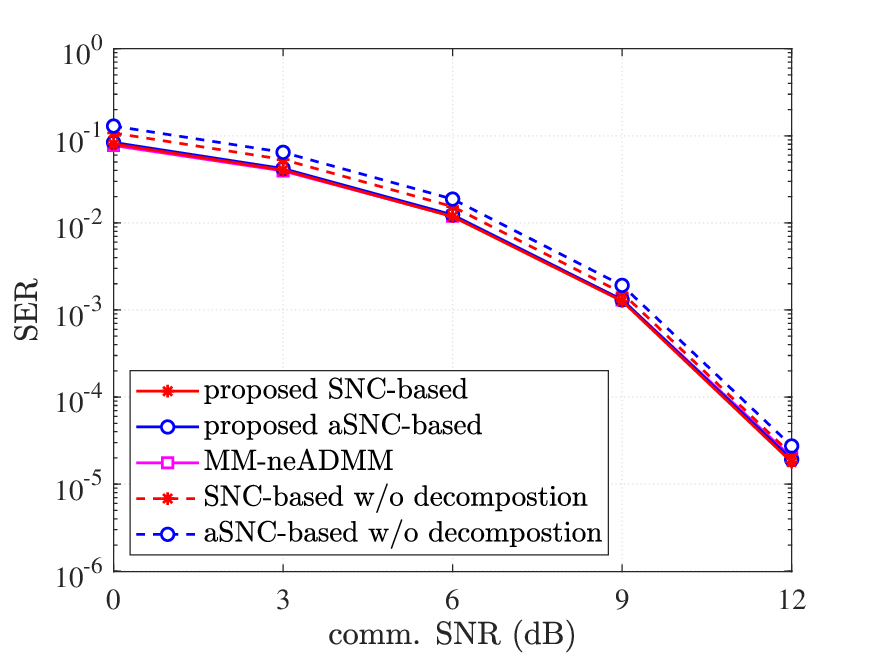}
        \caption{SER performance comparison}
        \label{fig_SER_vs_SNR}
    \end{subfigure}
    \belowcaptionskip = -10pt
    \caption{Comparison of radar SINR and communication SER for $N = 10$, $U = 3$, $P = 20$ dBm, and varying SNR constraint $\Gamma$.}
    \label{fig_SINR_vs_SER}
\end{figure}
\begin{figure}[t!]
    \centering
    \begin{subfigure}[t]{0.95\linewidth}
        \centering
        \includegraphics[width=\linewidth]{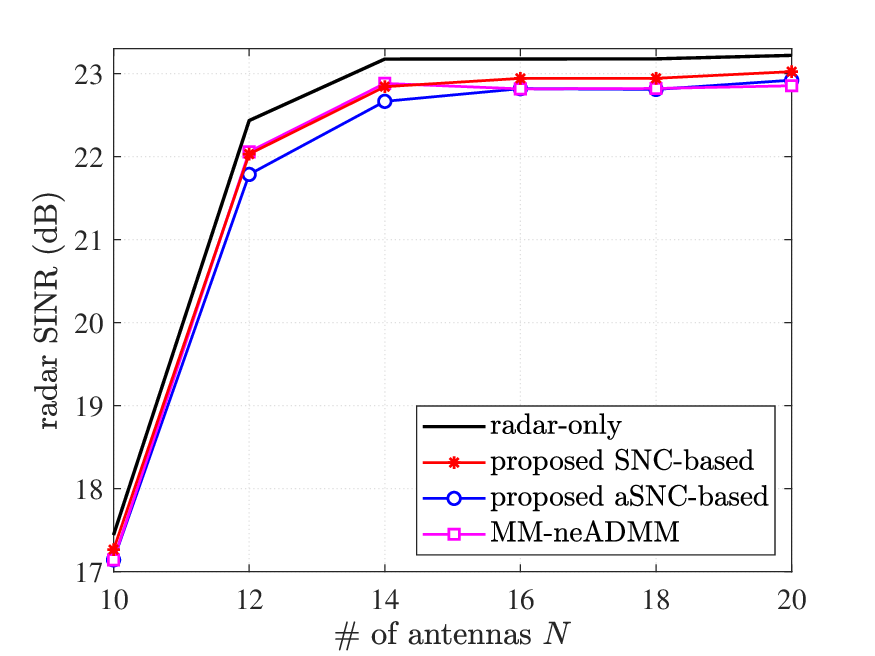}
        \caption{SINR versus number of antennas $N$ with $U = 3$.}
        \label{fig_SINR_vs_N}
    \end{subfigure}
    
    \begin{subfigure}[t]{0.95\linewidth}
        \centering
        \includegraphics[width=\linewidth]{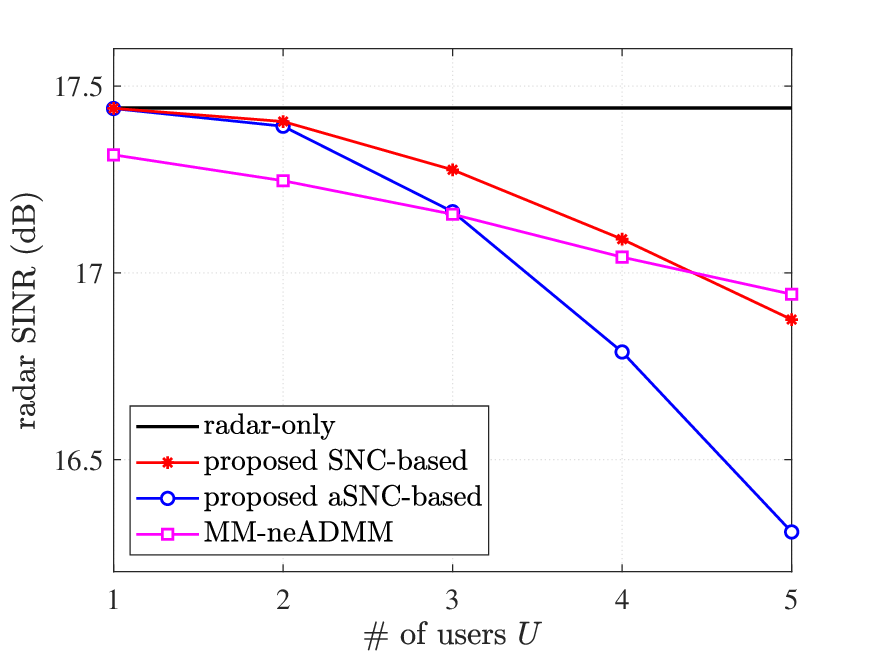}
        \caption{SINR versus number of users $U$ with $N = 10$.}
        \label{fig_SINR_vs_U}
    \end{subfigure}
    \belowcaptionskip = -15pt
    \caption{Comparison of SINR with varying numbers of antennas $N$ and users $U$ with communication constraint $\mrm{SNR} = 15$ dB and $P = 20$ dBm.}
    \label{fig_SINR_vs_U_N}
\end{figure}

Here, we evaluate performance using radar SINR. A radar SINR and communication SER comparison is given in Fig.~\ref{fig_SINR_vs_SER}, where we see that the proposed SNC algorithm gives the best radar SINR at low and medium communication SNR constraints, while the MM-neADMM method~\cite{RLiu-JSAC-2022} performs slightly better than SNC for high communication SNR constraints greater than 22 dB. For communication SNRs less than 12 dB, both SNC and aSNC  outperform MM-neADMM since a common phase rotation $\varphi$ by itself will be less likely to satisfy higher communication SNR constraints, resulting in greater use of the $\bsm{\tilde{\phi}}$ update in~\eqref{eq:phi_update} which sacrifices radar SINR to achieve the communication constraints. However, it should be noted that both SNC and aSNC have much lower computational complexity compared to MM-neADMM. The SER comparison in Fig.~\ref{fig_SER_vs_SNR} shows that SNC, aSNC, and MM-neADMM all give the same performance, indicating that the safety margin constraints are satisfied.

In Fig.~\ref{fig_SINR_vs_U_N}, we present the radar SINR for a fixed communication SNR constraint at 15 dB and different numbers of antennas $N$ (Fig.~\ref{fig_SINR_vs_N}) and users $U$ (Fig.~\ref{fig_SINR_vs_U}). The results in Fig.~\ref{fig_SINR_vs_N} show that the proposed SNC method achieves slightly higher SINR than aSNC and MM-neADMM, but all 3 methods provide performance within about 1/2 dB of the radar-only solution. While SINR performance increases significantly as $N$ increases to 12, it levels off for larger $N$ due to the limit on the transmit power. 

Fig.~\ref{fig_SINR_vs_U} shows a degradation in radar SINR as the number of users increases, since increasing $U$ means more communication constraints and subsequently fewer resources for sensing. When $U \le 4$, the proposed SNC-based method gives the best sensing performance while MM-neADMM gives the highest sensing SINR when $U> 5$. Again, this is because a large number of users means more communication SNR constraints, and a common phase rotation $\varphi$ alone will be less likely to help satisfy all of the SNR constraints. Thus more importance will inherently be placed on the $\bsm{\tilde{\phi}}$ update in~\eqref{eq:phi_update}, which sacrifices radar SINR for communication performance. It is interesting to note that when $U=1$, the radar SINR of the proposed SNC and aSNC methods is the same as that of the radar-only solution, i.e., there is no loss in radar sensing performance. In this case, the full flexibility of the common phase rotation $\varphi$ is able to rotate the user's received symbol so that its safety margin is satisfied without degrading the sensing performance.

\subsection{CRLB}
\begin{figure}[t!]
	\centering
	\includegraphics[width=0.95\linewidth]{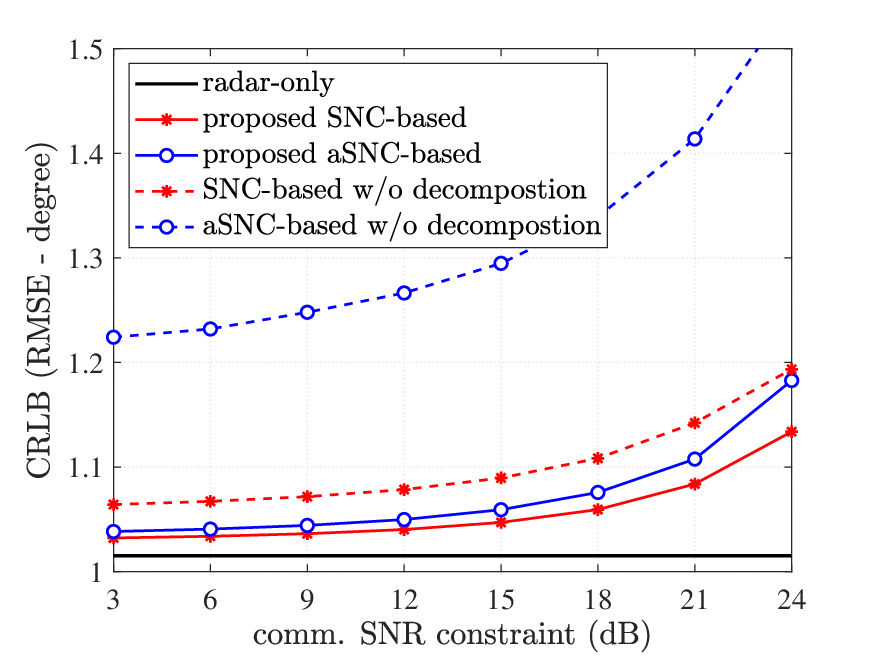}
	\caption{Comparison of CRLB for varying communication SNR constraint $\Gamma$, $N = 10$, $U = 3$, and $P = 20$ dBm.}
    \belowcaptionskip = -15pt
	\label{fig_CRLB_vs_SNR}
\end{figure}

In this section we evaluate the behavior of our proposed algorithms when the target angle CRLB is used as the performance metric. Fig.~\ref{fig_CRLB_vs_SNR} shows the CRLB for varying communication SNR constraints assuming three targets located at $\{-6^\circ, 0^\circ, 6^\circ\}$. For this scenario we observe that the proposed SNC method achieves the lowest CRLB, and as expected, the CRLB increases with stricter communication SNR constraints, reflecting the trade-off between communication and sensing. Fig.~\ref{fig_CRLB_vs_angleDiff} shows the improvement in the CRLB as the angular separation between the targets increases, where the communication SNR constraint is set to 15-dB. Note that in this case, both SNC and aSNC achieve performance essentially identical to that of the radar-only solution.

\begin{figure}[t!]
	\centering
	\includegraphics[width=0.95\linewidth]{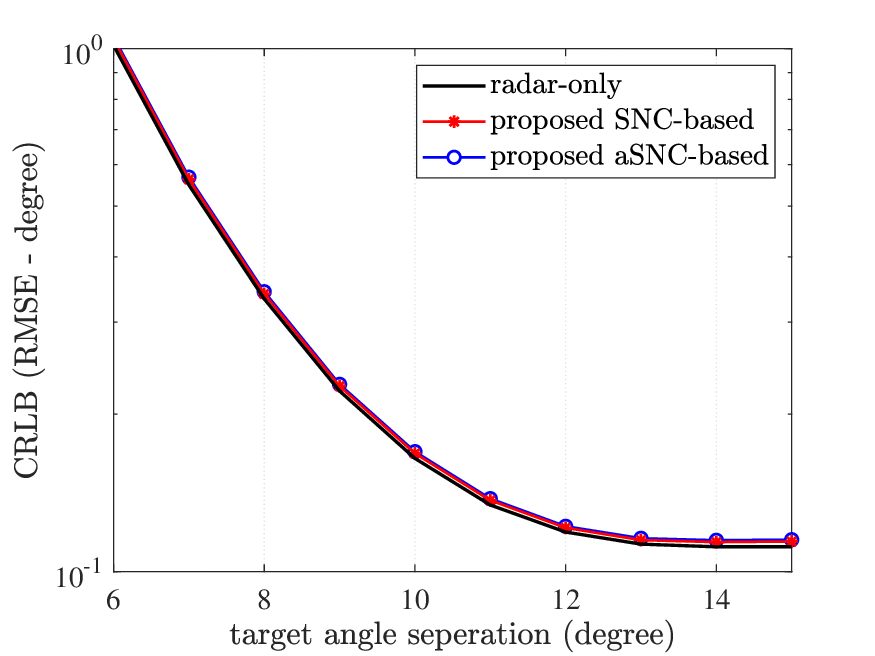}
    \belowcaptionskip = -15pt
	\caption{CRLB versus target angle separation with $N = 10$, $U = 3$, and $P = 20$ dBm.}
    \label{fig_CRLB_vs_angleDiff}
\end{figure}
\section{Conclusion}
\label{sec_conclusion}
In this paper we have developed two low-complexity yet highly effective algorithms for DFRC signal design using symbol level precoding. The proposed algorithms are based on the property of symmetric rotational invariance, which refers to the fact that a common symbol-rate phase modulation applied to the transmit signals across an antenna array does not affect the value of various radar sensing metrics, including beampattern similarity, SINR, and CRLB. We exploited this observation by perturbing the phase of the elements of a locally optimal radar-only waveform in order to (1) meet safety margin constraints for the communication users, and (2) yield a DFRC waveform that is as close as possible to the radar-only solution multiplied by a common phase modulation across all antennas. The proposed SNC- and aSNC-based algorithms were shown to be versatile and outperform existing methods in terms of radar sensing and communication performance as well as computational complexity.

\ifCLASSOPTIONcaptionsoff
  \newpage
\fi

\appendices
\section{Proof of Lemma~\ref{lemma1}.}
\label{appendix_1}
First, we compute the total derivative $df_{\SINR_k} (\mbf{x})$ w.r.t. $\mbf{x}$ and $\mbf{x}^\HH$, which is given as follows:
    \begin{align*}
        &df_{\SINR_k} (\mbf{x}) \\
        &= \sigma^2_{k}(d\mbf{x}^\HH)\mbf{B}_{k}^\HH\mbf{C}_k^{-1}\mbf{B}_{k}\mbf{x} + \sigma^2_{k}\mbf{x}^\HH\mbf{B}_{k}^\HH\mbf{C}_k^{-1}\mbf{B}_{k}d\mbf{x} \; + \\ &\hspace{4.5cm}\sigma^2_{k}\mbf{x}^\HH\mbf{B}_{k}^\HH(d\mbf{C}_k^{-1})\mbf{B}_{k}\mbf{x}\\
        &= \sigma^2_{k}(d\mbf{x}^\HH)\mbf{B}_{k}^\HH\mbf{C}_k^{-1}\mbf{B}_{k}\mbf{x} + \sigma^2_{k}\mbf{x}^\HH\mbf{B}_{k}^\HH\mbf{C}_k^{-1}\mbf{B}_{k}d\mbf{x} \; -  \\ &\hspace{3.5cm}\sigma^2_{k}\mbf{x}^\HH\mbf{B}_{k}^\HH\mbf{C}_k^{-1}(d\mbf{C}_k)\mbf{C}_k^{-1}\mbf{B}_{k}\mbf{x}\\
        &= \sigma^2_{k}(d\mbf{x}^\HH)\mbf{B}_{k}^\HH\mbf{C}_k^{-1}\mbf{B}_{k}\mbf{x} + \sigma^2_{k}\mbf{x}^\HH\mbf{B}_{k}^\HH\mbf{C}_k^{-1}\mbf{B}_{k}d\mbf{x} \;- \\
        &\;\;\;\;\,\sigma^2_{k}\mbf{x}^\HH\mbf{B}_{k}^\HH\mbf{C}_k^{-1}\times \\
        &\;\;\;\;\qquad\;\Big(\sum_{i\neq \ell_k} \sigma^2_i\mbf{B}_{i}\big((d\mbf{x})\mbf{x}^\HH + \mbf{x}d\mbf{x}^\HH\big)\mbf{B}_{i}^\HH\Big)\mbf{C}_k^{-1}\mbf{B}_{k}\mbf{x}\\
        &= (d\mbf{x}^\HH)\times \\
        &\;\;\;\;\;\Big[\sigma^2_{k}\big(\mbf{B}_{k}^\HH - \sum_{i\neq \ell_k}\sigma_i^2\mbf{x}^\HH\mbf{B}_{k}^\HH\mbf{C}_k^{-1}\mbf{B}_i\mbf{x}\mbf{B}_i^\HH\big)\mbf{C}_k^{-1}\mbf{B}_{k}\mbf{x}\Big] \\  
        &\;\;\;\;+\sigma^2_{k}\mbf{x}^\HH\mbf{B}_{k}^\HH\mbf{C}_k^{-1}\times \\
        &\hspace{2cm}\Big[\mbf{B}_{k} - \sum_{i\neq \ell_k}\sigma^2_i(\mbf{x}^\HH\mbf{B}_{i}^\HH\mbf{C}_k^{-1}\mbf{B}_{k}\mbf{x})\mbf{B}_i\Big]d\mbf{x}, \nbthis \label{eq:d_eta_k}
    \end{align*}
    where we have used the result $d\mbf{Z}^{-1} = -\mbf{Z}^{-1}(d\mbf{Z})\mbf{Z}^{-1}$~\cite[Eq.~3.40]{Jjorungnes2011Complex}.
    In addition, we have 
    \begin{equation}
        d f_{\SINR_k} (\mbf{x}) = \big(d\mbf{x}^\HH\big)\frac{\partial \eta_{k} (\mbf{x})}{\partial \mbf{x}^\HH} + \frac{\partial \eta_{k} (\mbf{x})}{\partial \mbf{x}} d\mbf{x}.
        \label{eq:d_eta_k_ref}
    \end{equation}
    Comparing~\eqref{eq:d_eta_k_ref} with~\eqref{eq:d_eta_k} yields
    \begin{align}
        &\frac{\partial f_{\SINR_k} (\mbf{x})}{\partial \mbf{x}^\HH} = \notag \\
        & \quad\sigma^2_{k}\Big(\mbf{B}_{k}^\HH - \sum_{i\neq \ell_k}\sigma^2_i(\mbf{x}^\HH\mbf{B}_{k}^\HH\mbf{C}_k^{-1}\mbf{B}_i\mbf{x})\mbf{B}_i^\HH\Big)\mbf{C}_k^{-1}\mbf{B}_{k}\mbf{x},
    \end{align}
    which is the Euclidean gradient of $f_{\SINR_k} (\mbf{x})$ in~\eqref{eq:grad_eta_k}.

\section{Proof of Lemma~\ref{lemma2}}
\label{appendix_2}
First we compute the total derivative of $\fCRLB$ w.r.t. $\mbf{X}$ and $\mbf{X}^*$, which is given as follows:
\begin{align*}
    &d \fCRLB \\
    &\;\;=\sigma^2_{\mrm{r}}\tr\big(\mbf{F}_{\vartheta}^{-1}\big) = -\sigma^2_{\mrm{r}}\tr\big(\mbf{F}_{\vartheta}^{-1}d(\mbf{F}_{\vartheta})\mbf{F}_{\vartheta}^{-1}\big)\\
    &\;\;= -\sigma^2_{\mrm{r}}\tr\big(\mbf{F}_{\vartheta}^{-2}\mbf{T}_\vartheta^\HH d(\mbf{\tilde{X}})\mbf{T}_\vartheta\big) \; + \\
    &\;\;\quad\; \sigma^2_{\mrm{r}}\tr\big(\mbf{F}_{\vartheta}^{-2}\mbf{T}_\vartheta^\HH d(\mbf{\tilde{X}})\mbf{T}_\alpha(\mbf{T}_\alpha^\HH\mbf{\tilde{X}}\mbf{T}_\alpha)^{-1}\mbf{T}_\alpha^\HH\mbf{\tilde{X}}\mbf{T}_\vartheta \big) \; + \\
    &\;\;\quad\; \sigma^2_{\mrm{r}}\tr\big(\mbf{F}_{\vartheta}^{-2}\mbf{T}_\vartheta^\HH\mbf{\tilde{X}}\mbf{T}_\alpha d((\mbf{T}_\alpha^\HH\mbf{\tilde{X}}\mbf{T}_\alpha)^{-1})\mbf{T}_\alpha^\HH\mbf{\tilde{X}}\mbf{T}_\vartheta \big) \; + \\
    &\;\;\quad\; \sigma^2_{\mrm{r}}\tr\big(\mbf{F}_{\vartheta}^{-2}\mbf{T}_\vartheta^\HH\mbf{\tilde{X}}\mbf{T}_\alpha (\mbf{T}_\alpha^\HH\mbf{\tilde{X}}\mbf{T}_\alpha)^{-1}\mbf{T}_\alpha^\HH d(\mbf{\tilde{X}})\mbf{T}_\vartheta \big)\\
    &\;\;=\sigma^2_{\mrm{r}}\tr (\mbf{V}d(\mbf{\tilde{X}})), \nbthis \label{eq:df_CRLB_0}
\end{align*}
where the matrix $\mbf{V}$ is given in~\eqref{eq:E_matrix}. The term $\tr (\mbf{V}d(\mbf{\tilde{X}}))$ in~\eqref{eq:df_CRLB_0} can be written in the following form:
\begin{align*}
    &\tr (\mbf{V}d(\mbf{\tilde{X}})) \notag \\
    &\;= \tr\big(\mbf{V} (d(\mbf{X}^*)\mbf{X}^\TT\otimes\mbf{I}_N) \big) + \tr\big(\mbf{V} (\mbf{X}^*d(\mbf{X})^\TT\otimes\mbf{I}_N) \big)\\
    &\;= \tr\big(\mbf{\tilde{V}} (d(\mbf{X}^*)\mbf{X}^\TT) \big) + \tr\big(\mbf{\tilde{V}} (\mbf{X}^*d(\mbf{X})^\TT) \big)\\
    &\;= \tr\big(\mbf{X}^\TT\mbf{\tilde{V}}d(\mbf{X}^*)\big) + \tr\big(d(\mbf{X})\mbf{X}^\HH\mbf{\tilde{V}}^\TT \big).
\end{align*}
Therefore, the total derivative of $\fCRLB$ w.r.t. $\mbf{X}$ and $\mbf{X}^*$ is
\begin{align}
    d \fCRLB = \tr\big(\sigma^2_{\mrm{r}}\mbf{X}^\TT\mbf{\tilde{V}}d(\mbf{X}^*)\big) + \tr\big(\sigma^2_{\mrm{r}}d(\mbf{X})\mbf{X}^\HH\mbf{\tilde{V}}^\TT \big).\label{eq:df_CRLB}
\end{align}
Comparing the total derivative in~\eqref{eq:df_CRLB} with the fact that
\begin{equation}
    d \fCRLB = \tr\left(\frac{\partial \fCRLB}{\partial \mbf{X}^*}d \mbf{X}^*\right) + \tr\left(d\mbf{X}\frac{\partial \fCRLB}{\partial \mbf{X}} \right)
\end{equation}
implies
\begin{equation}
    \frac{\partial \fCRLB}{\partial \mbf{X}^*} = \sigma_{\mrm{r}}^2 \mbf{X}^\TT\mbf{\tilde{V}}.
\end{equation}
Therefore, the Euclidean gradient of $\fCRLB(\mbf{X})$ w.r.t. $\mbf{X}$ is
\begin{equation}
    \nabla_{\mbf{X}} \fCRLB = \frac{\partial \fCRLB}{\partial \mbf{X}^\HH} = \sigma_{\mrm{r}}^2 \mbf{\tilde{V}}^\TT\mbf{X},
\end{equation}
which is the result in~\eqref{eq:grad_fCRLB}.

\bibliographystyle{IEEEtran}
\bibliography{ref}

%









\end{document}